\nc{\sF}{{\mathscr{F}}}
\nc{\sO}{{\mathscr{O}}}
\nc{\bcS}{\boldsymbol{\cS}}
\nc{\bTheta}{\boldsymbol{\Theta}}
\nc{\coneF}{\textsf{Cone}({\sF})}
\nc{\sub}{\text{sub}}
\nc{\density}{\mathscr{D}}
\nc{\qandq}{\quad \text{and} \quad}
\nc{\CPTP}{\text{\rm CPTP}}
\begin{document}

\title{One-shot distillation with constant overhead using catalysts}

\author[1]{Kun Fang \thanks{kunfang@cuhk.edu.cn}}
\author[2]{Zi-Wen Liu \thanks{zwliu0@tsinghua.edu.cn}}

\affil[1]{\small School of Data Science, The Chinese University of Hong Kong, Shenzhen,\protect\\  Guangdong, 518172, China}
\affil[2]{\small Yau Mathematical Sciences Center, Tsinghua University, Beijing 100084, China}

\date{\today}
\maketitle

\begin{abstract}

Quantum resource distillation is a fundamental task in quantum information science and technology. Minimizing the overhead of distillation is crucial for the realization of quantum computation and other technologies. 
Here we explicitly demonstrate how, for general quantum resources, suitably designed quantum catalysts (i.e., auxiliary systems that remain unchanged before and after the process) enable distillation with constant overhead in the practical one-shot setting, thereby overcoming the established logarithmic lower bound for one-shot distillation overhead.
In particular, for magic state distillation, our catalysis method paves a path for tackling the diverging batch size problem associated with code-based low-overhead protocols by enabling arbitrary reduction of the protocol size for any desired accuracy. Notably, this first yields constant-overhead magic state distillation methods with controllable protocol size.
Furthermore, we demonstrate a tunable spacetime trade-off between overhead and success probability enabled by catalysts which offers significant versatility for practical implementation.
Finally, we extend catalysis techniques to dynamical quantum resources and show that channel mutual information determines one-shot catalytic channel transformation, thereby advancing our understanding for both dynamical catalysis and information theory.

\end{abstract}

\section{Introduction}

Quantum technologies hold the potential of offering revolutionary advantages over classical methods in crucial technological tasks including computation and communication, driven by accurate manipulation of various kinds of quantum resources including quantum entanglement~\cite{Horodecki:entanglement}, quantum coherence~\cite{coherenceRMP}, and nonstabilizerness/``magic''~\cite{Bravyi2005}. However, the fragile nature of quantum information resources presents a formidable challenge, as the  microscopic systems are highly susceptible to various sources of interference such as environmental noises and control imperfections~\cite{Preskill2018quantumcomputingin}. These noise and error effects can significantly undermine the effectiveness and security of quantum computation and communication, thereby limiting the scalability and applicability of quantum technologies. In addressing these challenges, quantum resource distillation~\cite{bennett_purification_1996,Bravyi2005}---which entails converting noisy quantum resources into a smaller amount of purer ones with certain free operations---has emerged as a ubiquitously important procedure. For instance, magic state distillation~\cite{Bravyi2005,campbell2010bound,BravyiHaah12,HastingsHaah18,KrishnaTillich18,LBT19,Fang2020PRL,FangLiuPRXQ,Liu2022manybody,wills2024constantoverheadmagicstatedistillation,golowich2024asymptotically,nguyen2024good,lee2024low,golowich2024quantumldpccodestransversal} is of major interest in quantum computation as a leading scheme for the implementation of logical non-Clifford gates, which constitutes the main bottleneck in building fault-tolerant quantum computers.  Other notable examples include entanglement distillation~\cite{BBPS96:ent_dist,bennett_purification_1996,BDSW96:ent_qec,fang2019distillation} and coherence distillation~\cite{WinterYang16,Fang2018,Regula2018,hayashi2021finite} which play key roles in quantum communication, networking and cryptography. 

Given the wide importance of resource distillation procedures, optimizing their efficiency constitutes a vital problem  for quantum information technology. More specifically, the main goal is to reduce the \emph{distillation overhead}---the amount of noisy inputs needed to produce certain desired outputs---for which the key figure of merit is the exponent $\gamma$ in the $O(\log^\gamma (1/\ve))$ asymptotic scaling with respect to target error rate $\ve$ as $\ve\rightarrow 0$.
Mainly for theoretical interest and simplicity, researchers have traditionally focused on the idealized infinite-shot setting where one considers an average notion of overhead over an uncontrollably large amount of outputs. In particular, there have been intensive efforts and advances that gradually reduce the average overhead of magic state distillation since the earliest proposal achieving $\gamma\approx 2.46$~\cite{Bravyi2005}. It was long conjectured and believed that $\gamma$ cannot be reduced to below one~\cite{BravyiHaah12}; however, striking breakthroughs in recent years have broken this barrier, achieving sublogarithmic average overhead~\cite{HastingsHaah18,KrishnaTillich18} and culminating in recent results that eventually attain (almost) constant average overhead~\cite{wills2024constantoverheadmagicstatedistillation,golowich2024asymptotically,nguyen2024good,golowich2024quantumldpccodestransversal}.

Nonetheless, in realistic scenarios, both the number of available inputs and the desired outputs are always finite. This corresponds to the one-shot regime which is fundamentally distinct from the above infinite-shot setting and directly captures resource-processing tasks in practice. In this practical setting, the recent universal no-go theorems on resource purification (which applies to virtually any type of resource including  entanglement, coherence, asymmetry, thermal nonequilibrium, and magic)~\cite{Fang2020PRL,FangLiuPRXQ} rigorously imply the existence of a $\gamma\geq 1$ barrier, i.e., a logarithmic lower bound on distillation overhead. Notably, although there exist protocols achieving sublogarithmic average overhead in the asymptotic setting (as mentioned above), they inevitably suffer from a common ``batch size problem'' from a practical perspective: they usually involve a prohibitively large number of states or dimensions  (e.g., the Hastings--Haah sublogarithmic protocol~\cite{HastingsHaah18} works with a code that entails $\geq2^{58}$ input and $\geq 2^{44}$ output states), and the protocol volume must diverge rapidly as $\ve\rightarrow 0$, which induces enormous practical costs.

In this work, we introduce a distillation scheme utilizing the idea of quantum catalysis---a phenomenon analogous to chemical catalysis where catalysts can facilitate reactions by providing alternative pathways with lower activation energies---which helps circumvent these limitations. In the realm of quantum information, the catalysis phenomenon involves an auxiliary quantum system, namely the catalyst, which participates in the transformation process, enabling transformations that would otherwise be unachievable while remaining unaltered before and after the process and thus can be reused.
Quantum catalysis was first discussed in the context of entanglement transformation~\cite{jonathan1999entanglement} more than two decades ago and has since been extensively studied across various resource theories, including quantum coherence~\cite{aaberg2014catalytic,Takagi2022PRL}, thermodynamics~\cite{muller2018correlating,Shiraishi2021,Wilming2021PRL},  purity~\cite{boes2019neumann} and magic~\cite{campbell2011catalysis,gidney2018halving,beverland2020lower}. This area remains active as evidenced by recent works~\cite{char2023catalytic,datta2023there,Kondra2021PRL,Lipka-Bartosik2021,Shiraishi2021,Bavaresco2025}, with comprehensive reviews available in Refs.~\cite{datta2023catalysis,lipka2024catalysis}. While catalytic methods are widely useful, recent studies have also revealed fundamental limitations of their power, e.g., they~\emph{cannot} overcome bound entanglement~\cite{lami2023catalysis}, and \emph{cannot} increase distillable entanglement~\cite{ganardi2023catalytic}. These ``negative'' results highlight that whether catalysis can introduce advantages for a certain task is a nontrivial question~\cite{karvonen2021neither}. Furthermore, existing literature on distillation mostly focuses on the distillation rate, namely how much pure target resource can be extracted from a fixed amount of noisy input resource. This is conceptually dual to the overhead considered in this work, where the aim is to minimize the amount of input for some given target.

The main contribution of this work is to demonstrate the effectiveness of catalytic methods in enhancing distillation efficiency, particularly highlighting their striking capability to arbitrarily reduce one-shot distillation overhead which is unknown before, thereby, in some sense, completing the picture of distillation overhead optimization.
Specifically, we establish a general and rigorous ordering of resource overheads for one-shot (unassisted), multi-shot average, and one-shot catalytic distillation settings, as illustrated in Fig.~\ref{fig:surpassing}, by showing how any multi-shot distillation protocol can be converted into a one-shot catalytic protocol while maintaining the overhead  by designing suitable catalysts with reusability guarantees. 
\begin{figure}[htb]
    \centering
    \includegraphics[width=0.6\textwidth]{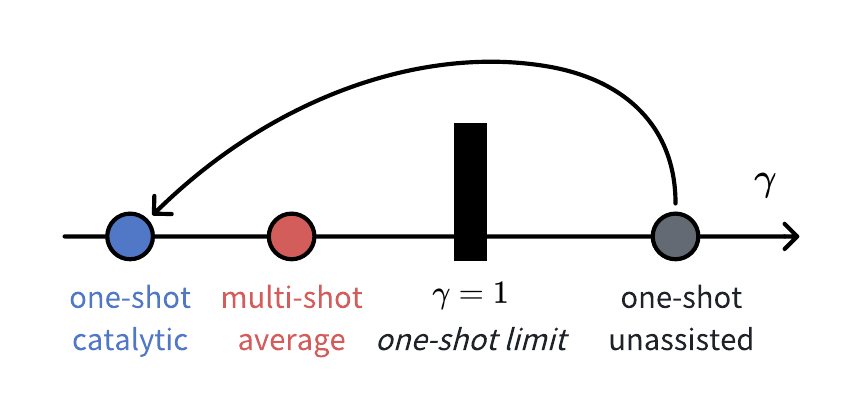}
    \vspace{-0.4cm}
    \caption{Comparison of distillation overhead $\log^\gamma(1/\varepsilon)$ across different settings. A fundamental no-go limit of $\gamma \geq 1$ has been shown to exist for one-shot (unassisted) distillation~\cite{Fang2020PRL,FangLiuPRXQ}. We show in this work that this limit can be surpassed with the aid of quantum catalysts.}
    \label{fig:surpassing}
\end{figure}
In particular, for magic state distillation, our catalytic approach can be applied to make the batch size of any protocol arbitrarily small (e.g., an $n$-to-$m$ protocol can be converted into a $\lceil n/m \rceil$-to-$1$ protocol) for any accuracy, providing a strategy for alleviating the qubit cost in large-scale computation. 
Furthermore, we demonstrate that the failure probability and overhead, corresponding to time and space costs respectively, can be traded for each other with the help of catalysts. Notably, we prove that the optimal constant for constant-overhead magic state distillation can even be reduced to $1$ at the price of sacrificing the success probability by a constant factor, pushing the efficiency of distillation to its ultimate limit. Lastly, we extend the catalytic technique to the channel (dynamical) setting. As a new application, we provide a one-shot operational interpretation of channel mutual information, showing that it determines the catalytic convertibility of quantum
channels and thereby resolving the channel version of a conjecture of Wilming~\cite{Wilming2021PRL}.

From a practical perspective, our catalytic schemes offer substantial flexibility in experiment and architecture design by enabling one to freely tailor the qubit and time costs to suit specific hardware features or capabilities.  It is worth noting that our protocols consist of quite simple operations and are thus expected to be particularly friently for experiments.

\section{Results}

\subsection{Preliminaries: general resource theory and distillation overhead}

To present our results in a rigorous and unified manner, we shall adopt the language of general resource theory, a framework that finds great success in studying different quantum resource features in recent years (background and setup can be found in e.g.~Refs.~\cite{ChitambarGour19,LBT19,Fang2020PRL,PhysRevX.9.031053}). A standard resource theory is defined by a set of \emph{free operations} $\sO$ and a set of \emph{free states} $\sF \subseteq \density$ (the complement of which are resource states) where $\density$ denotes the set of all density matrices.  Taking the magic theory as an example, $\sF$ consists of stabilizer states, and  Clifford operations are a standard choice of $\sO$.    Generally speaking, $\sF$ and $\sO$ can be adaptively defined, leading to a wide variety of meaningful resource theories, as long as they follow a \emph{golden rule}: any free operation can only map a free state to another free state, i.e.~$\cE(\rho)\in\sF, \forall\rho\in\sF, \forall\cE\in\sO$. 

The general goal of distillation tasks is to transform noisy resources into pure ones by some protocol represented by a free operation. In practice, we are also interested in protocols that only produce desired outputs with a certain probability (e.g.~protocols based on error correction, which work upon passing the syndrome measurements).      
To encompass such probabilistic cases, consider the generalization of $\sO$ to the class $\sO_{\sub}:=\{\cL \,|\, \forall \rho \in \sF, \exists\, p \geq 0, \psi \in \sF, \text{ s.t. } \cL(\rho) = p \cdot \psi \}$, which consists of subnormalized quantum operations, i.e.,~completely positive and trace-nonincreasing maps. {A free probabilistic protocol that transforms $\rho$ to $\gamma$ with probability $p$ is modelled by a quantum operation $\cE_{A\to XB}(\rho_A) = \ket{0}\bra{0}_X \ox \cL_{A\to B}(\rho_A) + \ket{1}\bra{1}_X \ox \cG_{A\to B}(\rho_A)$. Here $X$ is an external flag register that keeps track of whether the protocol succeeds (0) or not (1); then $\cL \in \sO_{\sub}$ represents the successful transformation such that $\cL_{A\to B}(\rho_A) = p \gamma_B$ where $p=\tr [\cL(\rho)]$.}
The case where $\cL$ is a completely positive trace-preserving (CPTP) map and thus $p=1$ corresponds to a deterministic protocol.

As already emphasized in the introduction, previous research on distillation overhead has focused on the asymptotic average overhead in the infinite-shot limit for simplicity, whereas the one-shot setting with controllable output size is practically more relevant and constitutes the main target in this work. 

Now we lay down formal definitions of the different types of overhead.
Denote the space for the source and target states as $S$, and the ancillary system that supports the catalyst as $A$.
For any two quantum states $\rho,\sigma \in \density(S)$, define their trace distance by $\Delta(\rho, \sigma):= \|\rho-\sigma\|_1/2$ with $\|\cdot\|_1$ being the trace norm.
Denote $\rho^{\ox n}\xrightarrow[]{(\ve,p)} \sigma^{\ox m}$ if there exists a resource transformation protocol $\cL\in \sO_{\sub}$ such that $\cL(\rho^{\ox n}) = p \cdot \eta^m$ with $\eta^m \in \density(S^m)$ and $\Delta(\eta_i^m, \sigma) \leq \ve$
for all $ i \in \{1,\cdots,m\}$, where $\eta_i^m$ is the $i$-th marginal of $\eta^m$. {Here we follow the convention in  distillation studies and focus on the error of each marginal states~\cite{Bravyi2005,BravyiHaah12}. 
Since pure targets are typically of interest, this is equivalent to defining the error over the global states}. 
\begin{definition}[Unassisted  overhead]
    Let $\rho,\sigma \in \density(S)$ be two arbitrary quantum states. Considering $\rho$ and $\sigma$ as the primitive (source) state and target state of distillation respectively, the (conventional unassisted notions of)  \emph{one-shot distillation overhead} $C_{\ve,p}$ and \emph{multi-shot average distillation overhead} $\widebar{C}_{\ve,p}$ are defined as 
    \begin{align}
        C_{\ve,p}(\rho,\sigma)  &\coloneqq \min \{n: \rho^{\ox n}\xrightarrow[]{(\ve,p)} \sigma\},\\
        \widebar{C}_{\ve,p}(\rho,\sigma)  &\coloneqq \min \{\lceil n/m \rceil: \rho^{\ox n}\xrightarrow[]{(\ve,p)} \sigma^{\ox m}\},
    \end{align}
    respectively.
\end{definition}

Furthermore, denote $\rho^{\ox n} \ox \omega \xrightarrow[]{(\ve,p)}\sigma \ox \omega$ if there exists a transformation protocol $\cL \in \sO_{\sub}$ and a catalyst $\omega \in \density(A)$ such that $\cL(\rho^{\ox n}\ox \omega) = p \cdot \nu$ with $\nu \in \density(SA)$, $\Delta(\nu_S, \sigma_S) \leq \ve$ and $\nu_A = \omega_A$, 
where $\nu_A,\nu_S$ are two marginals of $\nu$. 

\begin{definition}[One-shot catalytic distillation overhead]
Let $\rho,\sigma \in \density(S)$ be two arbitrary quantum states. Considering $\rho$ and $\sigma$ as the primitive (source) state and target state of distillation respectively, 
the \emph{one-shot catalytic distillation overhead} $\widetilde{C}_{\ve,p}$ is defined as 
\begin{align}
    \widetilde C_{\ve,p}(\rho, \sigma) := \min \left\{n: \rho^{\otimes n} \ox \omega \xrightarrow[]{(\ve,p)} \sigma \ox \omega, \exists\,\omega \in \density\right\}.
\end{align}
\end{definition}

The standard distillation setting of usual interest is to distill copies of less noisy versions of some target pure state by consuming more copies of more noisy versions of this target state, namely, $\sigma$ is pure and $n \geq m$. For example, the typical task in magic state distillation is to distill less but better $T$ states from noisy ones.

Previous results on distillation overhead are mainly bounds on the multi-shot average notion obtained by taking the asymptotic limit $n,m\rightarrow\infty$ which corresponds to the idealistic infinite-resource scenario. In contrast, the one-shot setting captures the practical finite-resource scenario.
With the above notations, the no-go theorem of Ref.~\cite{Fang2020PRL} sets the logarithmic fundamental limit $C_{\ve,p}(\rho, \sigma) = \Omega(\log (1/\ve))$ for ordinary unassisted one-shot distillation overhead. Our aim here is to demonstrate how the catalytic version $\widetilde C_{\ve,p}$ can break this barrier while maintaining the one-shot nature.

\subsection{General theory: one-shot catalytic resource distillation}

Now we introduce our main results.  
Here, we deal with the standard distillation setting where more source states are consumed to distill less pure target states, namely, $\sigma$ is pure and $n \geq m$.
Axiomatically, it is assumed that appending free systems, discarding and permuting subsystems, and classically controlled free operations are free, and no other properties of the resource theory are needed, rendering the framework highly flexible and applicable to most relevant theories.

Since a one-shot distillation protocol can be regarded as a multi-shot distillation protocol with further conditions, it follows directly that $\widebar{C}_{\ve,p}(\rho, \sigma) \leq C_{\ve,p}(\rho, \sigma)$ for any quantum states $\rho$ and $\sigma$. Here we adapt a catalysis technique originally proposed in Ref.~\cite{duan2005multiple} for LOCC (local operations with classical communication) transformations of quantum entanglement (further explored in recent works such as~\cite{char2023catalytic,datta2023there,Kondra2021PRL,Lipka-Bartosik2021,Shiraishi2021,Takagi2022PRL,Wilming2021PRL,Bavaresco2025})  to our distillation task, which involves an unequal number of source and target states. This is achieved by regrouping the source states into blocks and embedding the target state into a larger space. 
Particularly, we show that any multi-shot distillation protocol from $\rho^{\ox n}$ to $\sigma^{\ox m}$ can effectively be turned into a one-shot catalytic distillation protocol from $\rho^{\lceil n/m\rceil} \ox \omega$ to $\sigma \ox \omega$ with the same performance. Here, the catalyst $\omega$ plays a crucial role by enabling new transformation pathways that are otherwise unattainable and absorbing part of the complexity of the transformation process. Although the catalyst temporarily participates in the transformation, it is fully restored afterward.  As a result, we have the inequality $\widetilde{C}_{\ve,p}(\rho, \sigma) \leq \widebar{C}_{\ve,p}(\rho, \sigma)$. In conclusion, the general relation of distillation overhead across different settings is as follows: 
\begin{theorem}
\label{thm: overheads}
For any quantum states $\rho$ and $\sigma$, target error $\ve\in [0,1]$ and success probability $p\in[0,1]$, the following relation holds: $\widetilde{C}_{\ve,p}(\rho, \sigma) \leq \widebar{C}_{\ve,p}(\rho, \sigma) \leq C_{\ve,p}(\rho,\sigma)$.
In particular, any multi-shot distillation protocol can be turned into an one-shot catalytic distillation protocol with the same overhead, implying the first inequality.
\end{theorem}

\begin{proof}
The second inequality follows from their definitions. We now prove the first inequality by constructing an explicit catalytic transformation with the desired performance. The catalyst structure we use here was originally proposed in Ref.~\cite{duan2005multiple} in the context of LOCC transformations of quantum entanglement. While similar catalyst structures have been used in various contexts recently~\cite{char2023catalytic,datta2023there,Kondra2021PRL,Lipka-Bartosik2021,Shiraishi2021,Takagi2022PRL,Wilming2021PRL,Bavaresco2025}, existing works mostly consider the transformation of source states to an equal number of target states. To adapt it to the distillation setting, we regroup the source states into blocks. An illustration for the  steps of the procedure is given in Figure~\ref{fig:catalyst_state}. Consider any multi-shot distillation protocol $\cL\in \sO_{\sub}$ such that $\cL(\rho^{\ox n}) = p \eta^m$ and $\Delta(\eta_i^m, \sigma) \leq \ve$
for all $ i \in \{1,\cdots,m\}$, where $\eta^m \in \density(S^m)$ and $\eta_i^m$ is the $i$-th marginal of $\eta^m$. Let $k = \lceil n/m \rceil$ be the number of copies in each block and denote $\zeta = \rho^{\ox k}$. Since $mk = m \lceil n/m \rceil \geq n$, we have a free operation $\cL_1$ such that 
\begin{align*}
    \cL_1(\zeta^{\ox m}) = \cL_1(\rho^{\ox mk}) = p 
    \eta^m \qandq \Delta(\eta_i^m, \sigma) \leq \ve,
\end{align*}
for all $ i \in \{1,\cdots,m\}$.
This can be done by simply discarding the residual $mk-n$ copies of the source states and then performing the transformation $\cL$ on $\rho^{\ox n}$.
Note that $\eta_m$ and $\sigma^{\ox m}$ live in $\density(S^m)$, but we can embed them into a larger space $\density((S^k)^m)$ so that the individual subsystem matches the space of $\zeta$. More precisely, consider embedding $\cE(\cdot) = (\cdot) \ox \pi_{S}^{\ox (k-1)}$ where $\pi\in\sF$ is a free state, and let $\hat \eta^m = \cE^{\ox m}(\eta^m)$ and $\hat \sigma = \cE(\sigma)$. Consider $\cL_2 = \cE \circ \cL_1$ we get a transformation
\begin{align*}
\cL_2(\zeta^{\ox m}) = p \hat \eta^m \qandq \Delta(\hat \eta^m_i, \hat \sigma) \leq \ve,
\end{align*}
where $\hat \eta^m_i$ is the marginal state on the $i$-th system and the inequality follows since trace distance is invariant under our embedding. 

Now, we construct a catalyst $\omega$ to transform $\zeta$ to $\hat \sigma$ within the given error. Let $\hat \eta^m_{1:i}$ be the marginal state of $\hat \eta_m$ on the first $i$ systems. Consider the catalyst
\begin{align}\label{eq: catalyst}
    \omega := \frac{1}{m}\sum_{i=1}^{m} \zeta^{\ox i-1} \ox \hat \eta^m_{1:m-i} \ox \ket{i}\bra{i},
\end{align}
where $\ket{i}\bra{i} \in \density(F)$ are states in a classical flag register $F$.
This gives an overall state 
\begin{align*}
    \zeta \ox \omega = \frac{1}{m}\sum_{i=1}^m \zeta^{\ox i} \ox \hat \eta^m_{1:m-i} \ox \ket{i}\bra{i}.
\end{align*}
We now aim to use this catalyst to complete the desired transformation. {Let $\cP_0(\cdot):= P_0 (\cdot) P_0$ with $P_0:= \sum_{i=1}^{m-1} \ket{i}\bra{i}$ and $\cP_1(\cdot):= P_1 (\cdot) P_1$ with $P_1:= \ket{m}\bra{m}$ be two projecting maps on the flag register $F$.
First, performing a classically controlled transformation $\cI \ox \cP_0 + \frac1p\cL_2 \ox \cP_1$ on $\zeta \ox \omega$, we get
\begin{align*}
\nu_1 & := \left(\cI \ox \cP_0 + \frac1p\cL_2 \ox \cP_1\right) (\zeta \ox \omega)\\
& = \frac{1}{m}\sum_{i=1}^{m-1} \zeta^{\ox i} \ox \hat{\eta}^m_{1:m-i} \ox \ket{i}\bra{i} + \frac{1}{m} \hat \eta^m \ox \ket{m}\bra{m},
\end{align*}
where $\cI$ is an identity map. This succeeds with probability $p$. Note that this transformation is simply a one-way LOCC operation~\cite{Chitambar_2014}, where we measure the classical flag register and feed forward the classical outcome to adaptively apply the quantum operation $\cI$ or $\cL_2$. Therefore, this classically controlled operation is clearly free.} Second, by cyclically permuting the classical registers of $\nu_1$ such that $i\to i+1$ and $m \to 1$, we get
\begin{align}\label{eq: nu 2}
    \nu_2 := \frac{1}{m} \sum_{i=1}^m \zeta^{\ox i-1} \ox \hat \eta^m_{1:m-i+1} \ox \ket{i}\bra{i}.
\end{align}
Third, permuting the quantum registers in a similar way such that $i\to i+1$ and $m \to 1$, we get the resulting state $\nu_3$. Note that $\nu_3$ and $\nu_2$ are the same state but with different system orders. Finally, note that the embedding operation is reversible by removing the ancillary registers. Denote $\cE^{-1}(\cdot) = \tr_{S^{k-1}}(\cdot)$ to be the reverse embedding operation and $\nu := \cE^{-1} \ox \cI^{\ox m-1}(\nu_3)$ be the final state after reverse embedding.

After these transformations, we can show that (i) the marginal state on the first system of $\nu$ is our target state satisfying the target error; (ii) the marginal state on the last $m-1$ systems of $\nu$ is exactly the catalyst $\omega$. For the first claim, the marginal state on the first system of $\nu$ (or equivalently, the marginal state on the last system of $\nu_2$ after reverse embedding) is given by
\begin{align*}
\nu' := \cE^{-1}\left(\frac{1}{m} \sum_{i=1}^m \hat \eta^m_i\right)  = \frac{1}{m} \sum_{i=1}^m \eta^m_i.   
\end{align*}
Since $\Delta(\eta^m_i, \sigma) \leq \ve$ we have $\Delta(\nu', \sigma) \leq \ve$ by the convexity of trace distance. The second claim is evident by checking that the first $m-1$ systems of $\nu_2$ gives $\omega$ in Eq.~\eqref{eq: catalyst}. This gives a one-shot catalytic distillation protocol with the same performance in target error $\ve$ and success probability $p$, thereby completing the proof.
\end{proof}

\begin{figure}[ht]
    \centering
    \includegraphics[width=\textwidth]{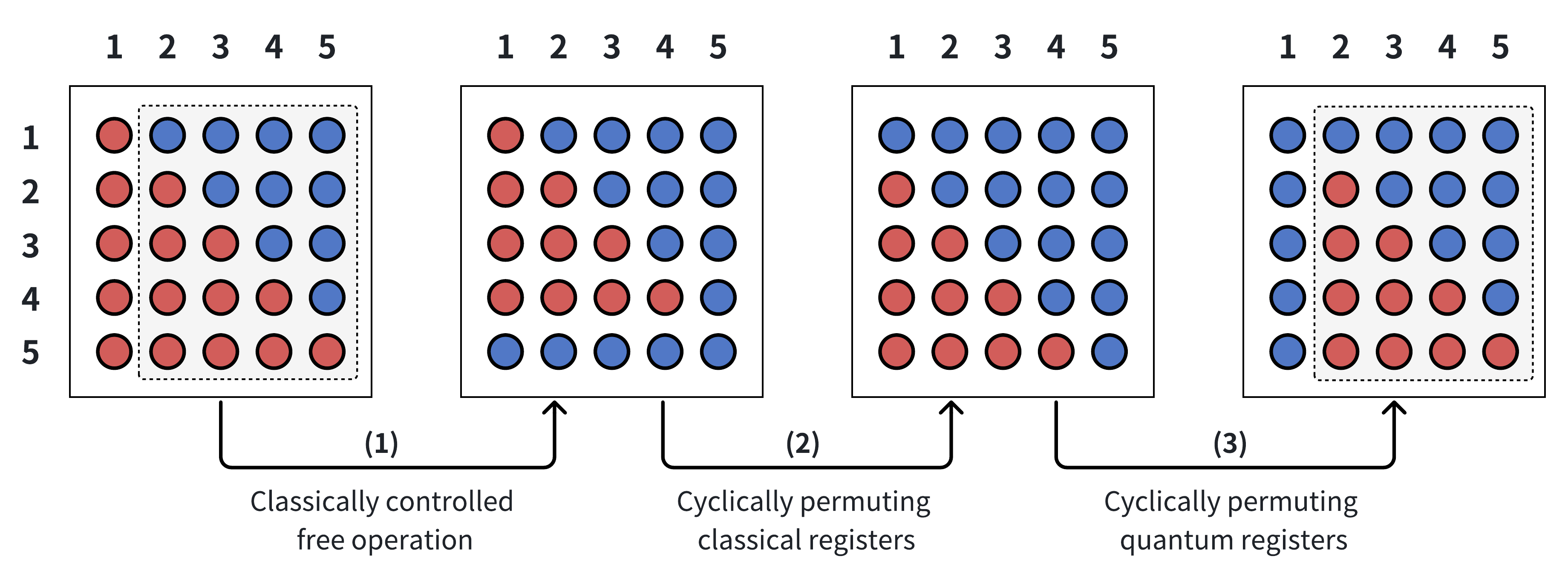}
    \caption{An illustration for the  steps of the procedure used to prove Theorem~\ref{thm: overheads}, for the case where $n=15$, $m=5$, and therefore $k = 3$. Each dot represents a quantum state on the system $S^3$. Red dots correspond to groups of states $\zeta = \rho^{\otimes 3} \in \density(S^3)$, while blue dots represent embedded states $\hat{\sigma} = \sigma \otimes \pi^{\otimes 2} \in \density(S^3)$ that match the system of $\zeta$. The overall quantum state is a mixture of rows, each labeled with a classical register. In the first step, a classically controlled free operation is applied, transforming the last row from $\zeta^{\otimes 5} = \rho^{\otimes 15}$ to $\hat{\sigma}^{\otimes 5}$. The second step involves cyclically permuting the classical registers (i.e., the rows of dots), and the third step involves cyclically permuting the quantum registers (i.e., the columns of dots). The dashed boxes highlight the catalyst state, showing that it remains unchanged before and after the transformation.}
    \label{fig:catalyst_state}
\end{figure}

It is worth noting that the catalyst technique has been studied in the context of distillation before. However, previous works focused on optimizing the output resource that can be extracted from some fixed noisy input, for which known results are negative. For instance, it has been shown that catalysts \emph{cannot} overcome bound entanglement~\cite{lami2023catalysis} and \emph{cannot} increase distillable entanglement~\cite{ganardi2023catalytic}. In contrast, this work is intended to initiate a different perspective by establishing the advantages of catalysts in enhancing distillation overhead, where the aim is to minimize the amount of noisy input required to achieve a desired target. Theorem~\ref{thm: overheads} establishes a foundational result in this new direction, demonstrating the potential of catalysts to reduce the distillation overhead for the first time. In the sections that follow, we confirm the effectiveness of catalysts in this regard through concrete examples, particularly in the context of magic state distillation, highlighting their capability to surpass previously established limits.

A key advantage of using a catalyst is its recoverability after the transformation, allowing for repeated reuse. We now provide a theoretical guarantee that the catalyst can be reused indefinitely without any degradation in its expected performance.

\begin{theorem}\label{thm: catalyst reuse}
For a deterministic one-shot catalytic resource distillation procedure, after $l \geq 1$ repeated uses of the catalyst $\omega_A$, we obtain a joint state $\nu_{S_1S_2\cdots S_l A}$ such that the catalyst is exactly returned on its marginal $\nu_A = \omega_A$ and the target states $\nu_{S_1} = \nu_{S_2} = \cdots = \nu_{S_l}$ with error $\Delta(\nu_{S_i}, \sigma_{S}) \leq \ve$ for all $i \in \{1,\cdots,l\}$.    
\end{theorem}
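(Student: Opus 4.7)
The plan is to proceed by induction on the number of reuses $l$, tracking the global state $\nu^{(l)}_{S_1\cdots S_l A}$ produced when the deterministic one-shot catalytic map $\cL$ is applied repeatedly to fresh source blocks $\rho^{\ox n}$ together with the recycled $A$-subsystem. The base case $l=1$ is exactly the hypothesis: $\cL(\rho^{\ox n}\ox\omega_A)=\nu^{(1)}_{SA}$ with $\nu^{(1)}_A=\omega_A$ and $\Delta(\nu^{(1)}_S,\sigma_S)\leq\ve$. For the inductive step I would write $\nu^{(l)}_{S_1\cdots S_l A} = (\id_{S_1\cdots S_{l-1}}\ox\cL)(\nu^{(l-1)}_{S_1\cdots S_{l-1}A}\ox\rho^{\ox n})$ and verify three marginal identities in turn: (i) $\nu^{(l)}_A = \omega_A$, (ii) $\Delta(\nu^{(l)}_{S_l},\sigma_S)\leq\ve$, and (iii) $\nu^{(l)}_{S_i} = \nu^{(l-1)}_{S_i}$ for every $i<l$.

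First, I would compute the marginal on $S_l A$: partial tracing the identity factor over $S_1,\ldots,S_{l-1}$ reduces the catalyst-bearing portion to $\nu^{(l-1)}_A$, which equals $\omega_A$ by the inductive hypothesis, so $\nu^{(l)}_{S_l A} = \cL(\omega_A\ox\rho^{\ox n}) = \nu^{(1)}_{SA}$. This immediately yields (i) and (ii) via the base case. Second, I would handle the older outputs $S_i$ with $i<l$: because $\cL$ is trace-preserving from $S^n A$ to $S_l A$, partial tracing over $S_l A$ on the right-hand side of the recursion simply erases the newly adjoined $\rho^{\ox n}$ and the $A$-factor of $\nu^{(l-1)}_{S_1\cdots S_{l-1}A}$, returning $\nu^{(l-1)}_{S_1\cdots S_{l-1}}$. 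Hence $\nu^{(l)}_{S_i} = \nu^{(l-1)}_{S_i}$ for $i<l$, and unrolling the induction gives $\nu^{(l)}_{S_i} = \nu^{(i)}_{S_i} = \nu^{(1)}_S$ for every $i$. Combining the two cases yields simultaneously $\nu^{(l)}_{S_1}=\cdots=\nu^{(l)}_{S_l}$ and the uniform $\ve$-error bound, closing the induction.

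The main subtlety to anticipate is the build-up of quantum correlations between the catalyst $A$ and the previously-produced outputs $S_1,\ldots,S_{l-1}$ in $\nu^{(l-1)}_{S_1\cdots S_{l-1}A}$, which could a priori contaminate subsequent rounds. The observation that defuses this concern is that the theorem only asks for \emph{single-system} marginal statements: the exact catalytic identity $\nu^{(l-1)}_A=\omega_A$ guarantees that when $\cL$ is fed a fresh $\rho^{\ox n}$, its effective input is identical on the relevant subsystem to the base-case input $\omega_A\ox\rho^{\ox n}$, allowing the base case to be invoked verbatim; meanwhile CPTP-ness ensures that producing the $l$-th output does not disturb the marginals on earlier $S_i$. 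No entropic or contractivity estimates are required---the whole argument is a bookkeeping of partial traces coupled with one application of the base case per step, and the exact (not approximate) catalyst-return property is essential for this to hold with no drift in the error after arbitrarily many reuses.
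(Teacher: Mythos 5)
Your proposal is correct and follows essentially the same route as the paper's proof: the exact return $\nu_A=\omega_A$ plus the fact that $\cL$ acts only on the fresh source and $A$ lets you identify the new $S_lA$ marginal with the first-round output, while trace-preservation of the deterministic $\cL$ guarantees the earlier $S_i$ marginals are untouched. The only difference is presentational---you package the paper's "repeat the two-round computation $l$ times" as an explicit induction, which is a fine formalization of the same argument.
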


\begin{proof}
Let $T = S^k$ be the systems on the source state and $\zeta_T = \rho_S^{\ox k}$. Let $\cL$ be the catalytic transformation. In the first round, we get $\cL_{TA\to S_1A}(\zeta_{T} \ox \omega_A) = \nu_{S_1A}$ with $\Delta(\nu_{S_1},\sigma_{S_1}) \leq \ve$, and $\nu_A = \omega_A$.
Then we apply $\cL$ again to a fresh copy of the source state $\zeta_T$ together with the catalyst and leave the state on the system $S_1$ untouched. Denote the output systems as $S_2A$ where $S_2$ is isomorphic to $S_1$. This gives the global state $\cL_{TA\to S_2A}(\zeta_T\ox \nu_{S_1A})$.
Then we can check the catalyst on the system $A$ by
$\tr_{S_2S_1}\cL_{TA\to S_2A}(\zeta_T\ox \nu_{S_1A}) =  \tr_{S_2}\cL_{TA\to S_2A}(\zeta_T\ox \omega_{A}) = \omega_A$.
So the marginal state on the catalytic system is unchanged. We can also check the target state on the system $S_2$ by $\tr_{S_1A} \cL_{TA\to S_2A}(\zeta_T\ox \nu_{S_1A}) = \tr_{A} \cL_{TA\to S_2A}(\zeta_T\ox \omega_A) = \tr_{A} \cL_{TA\to S_1A}(\zeta_T\ox \omega_A) = \nu_{S_1}$, which means that the marginal state we distill remains exactly the same as in the first round. The stated result is obtained by applying the above process repeatedly $l$ times.
\end{proof}

\vspace{0.2cm}

An apparent subtlety is that the catalyst may exhibit correlations with the remaining systems. However, when addressing distillation tasks aimed at pure target states which is typically the case of interest, the correlation between the obtained target state and the remaining systems is under control. {This can be made explicit by applying Lemma 8 from the Supplemental Material of Ref.~\cite{ganardi2023catalytic}, which states that an error threshold on the target state $\Delta(\nu_S, \ket{\phi}\bra{\phi}_S) \leq \ve$ ensures that the correlation between the target state and the catalyst remains small, with $\Delta(\nu_{SA}, \ket{\phi}\bra{\phi}_S \ox \nu_A) \leq \ve + 3 \sqrt{\ve}$. Consequently, any intermediate usage of the target state and the catalyst will have a negligible impact on the other systems. In the following, we give a further improved bound for this fact through a simplified proof.}
{
\begin{lemma}
    Suppose $\Delta(\nu_S,\ket{\phi}\bra{\phi}_S) \leq \ve$, then $\Delta(\nu_{SA},\ket{\phi}\bra{\phi}_S \ox \nu_A) \leq 2\sqrt{\ve}$.
\end{lemma}
\begin{proof}
 Let $F(\rho,\sigma) = \tr \sqrt{\rho^{1/2} \sigma \rho^{1/2}}$ be the fidelity. Since $\Delta(\nu_S,\ket{\phi}\bra{\phi}_S) \leq \ve$, we have $F(\nu_S, \ket{\phi}\bra{\phi}_S) \geq \sqrt{1-\ve}$. As $\nu_{SA}$ is an extension of $\nu_S$, by Uhlmann's theorem~\cite{uhlmann1976transition}, there exists an extension of $\ket{\phi}\bra{\phi}_S$ such that its fidelity with $\nu_{SA}$ is the same as $F(\nu_S, \ket{\phi}\bra{\phi}_S)$. However, since $\ket{\phi}\bra{\phi}_S$ is a pure state, its extension can only be in the form of a tensor product. Therefore, we can assume its extension as $\ket{\phi}\bra{\phi}_S \ox \mu_A$ such that $F(\nu_{SA}, \ket{\phi}\bra{\phi}_S \ox \mu_A) = F(\nu_S, \ket{\phi}\bra{\phi}_S)$. By the data processing inequality of fidelity, we get $F(\nu_A,\mu_A) \geq F(\nu_{SA}, \ket{\phi}\bra{\phi}_S \ox \mu_A) \geq \sqrt{1-\ve}$. This implies that $\Delta(\nu_{SA}, \ket{\phi}\bra{\phi}_S \ox \mu_A) \leq \sqrt{\ve}$ and $\Delta(\nu_A,\mu_A) \leq \sqrt{\ve}$. By the triangle inequality of trace distance, we have $\Delta(\nu_{SA}, \ket{\phi}\bra{\phi}_S \ox \nu_A) \leq \Delta(\nu_{SA}, \ket{\phi}\bra{\phi}_S \ox \mu_A) + \Delta(\ket{\phi}\bra{\phi}_S \ox \mu_A, \ket{\phi}\bra{\phi}_S \ox \nu_A) \leq 2\sqrt{\ve}$.
\end{proof}}

We also note that the catalyst is guaranteed to remain effective in deterministic protocols, while for probabilistic ones there is a risk of losing the validity of the catalyst when the transformation fails which arises from the probabilistic nature of multi-shot protocols. Nevertheless, it is guaranteed that the probability of losing the catalyst in the converted one-shot catalytic protocol is no greater than~$p$ if the multi-shot protocol fails with probability $p$. Moreover, the failure probability is finite and can be significantly suppressed by e.g.~concatenation, and there are ways to make probabilistic distillation protocols deterministic~\cite{heussen2025magicstatedistillationmeasurements}. {Finally, we stress that the magic state distillation protocols  we apply later are already deterministic~\cite{wills2024constantoverheadmagicstatedistillation}, so the aforementioned risk is not present.}

\subsection{Applications in magic state distillation}

As a potential resolution to the non-Clifford gate bottleneck of fault-tolerant quantum computing, magic state distillation has been a subject of extensive study since Bravyi and Kitaev's proposal in 2005~\cite{Bravyi2005}. 
The primary focus is to reduce the resource overhead of magic state distillation, with numerous improvements achieved over the years~\cite{Bravyi2005,BravyiHaah12,Haah2018codesprotocols,meier2012magic,campbell2012magic,jones2013multilevel,HastingsHaah18,KrishnaTillich18,wills2024constantoverheadmagicstatedistillation,golowich2024asymptotically,nguyen2024good,lee2024low,golowich2024quantumldpccodestransversal}.

It was conjectured~\cite{BravyiHaah12} and long believed that the overhead exponent $\gamma$ cannot be reduced to below one. 
In the one-shot setting, this can indeed be proven by applying the universal no-go theorem for resource purification in Ref.~\cite{Fang2020PRL}.
Remarkably, Hastings and Haah first found a code that achieves an average  overhead of $n/m = O(\log^\gamma(1/\ve))$ for $\gamma \approx 0.678$~\cite{HastingsHaah18}, and very recent progress further shows that the exponent can be made arbitrarily close to $0$ or even to equal $0$, namely achieving  constant average overhead~\cite{wills2024constantoverheadmagicstatedistillation,golowich2024asymptotically,nguyen2024good}.  As explained, despite these advances, the low overhead in these protocols comes with the serious problem that they require uncontrollably large batch sizes as $\ve$ tends to zero. 
By applying the catalytic conversion established in Theorem~\ref{thm: overheads}, we can turn the low-overhead distillation protocols into one-shot counterparts that achieves arbitrary accuracy with an arbitrary amount of (even only one) target states. This provides a way to circumvent the fundamental limit of $\Omega(\log(1/\ve))$ for one-shot distillation~\cite{Fang2020PRL}.

\begin{corollary}
There exist one-shot catalytic magic state distillation protocols that achieve any given target error with {certainty (i.e., success probability $1$)} and constant overhead. {In particular,  by Theorem~\ref{thm: catalyst reuse}, the catalyst can be reused indefinitely without degradation in its performance.}
\end{corollary}

As discussed, the correlation between the catalyst and the obtained target state can be made sufficiently small.
A related result by Rubboli and Tomamichel~\cite{rubboli2022fundamental} indicates that making such residual correlations arbitrarily small requires a divergent amount of resources in the catalyst if the resource theory has multiplicative maximum fidelity of resource (i.e., $\hat{F}(\rho_1 \otimes \rho_2) = \hat{F}(\rho_1) \cdot \hat{F}(\rho_2)$, where $\hat{F}(\rho) := \max_{\sigma \in \sF} F(\rho,\sigma)$ with $F$ being the normal state fidelity). However, the maximum fidelity in magic theory (also named stabilizer fidelity) is known to be non-multiplicative~\cite[Section 6.2]{Bravyi2019simulationofquantum}. It remains an interesting open question whether in magic theory it is possible to find exotic catalysts with a finite amount of magic that achieve vanishing residual correlations or error.

\subsection{Catalytic spacetime conversion: trading success probability for reduced overhead}

In distillation tasks, the resource overhead essentially represents the space requirement, characterizing the size of quantum systems that need to be coherently manipulated in a single experiment. On the other hand, the success probability is associated with  the number of experimental repetitions needed to accomplish the tasks and thus determines the time cost. As the size of the qubits and coherent control is limited, especially in the near term, it is usually more feasible to scale up an experiment by repeating the experiment multiple times rather than conducting it with larger systems in fewer trials. By leveraging quantum catalysis, we enable a trade-off between space and time, allowing  the resource burden to be shifted from qubit overhead to time.

More specifically, we show that if a transformation from $\rho^{\otimes n}$ to $\sigma^{\otimes m}$ is achievable with success probability $p$, then a catalytic transformation from $\rho^{\otimes k}$ to  $\sigma$ can also be achieved with success probability $p m / \lceil n/k \rceil$, for any $1 \leq k \leq n/m$, effectively trading time for space. Notably, by setting $k = 1$, we obtain a one-shot catalytic distillation protocol with unit overhead and success probability $p m / n$, in which case the overhead is minimized to its extreme.

\begin{theorem}\label{thm: reduce the overhead}
Suppose there exists a distillation protocol transforming $\rho^{\ox n}$ to $\sigma^{\ox m}$ with success probability $p$ and target error $\ve$. Then it holds for any $1\leq k \leq n/m$ that $\widetilde{C}_{\ve,pm \left\lceil n/k \right\rceil^{-1}}(\rho, \sigma) \leq k$. In particular, $\widetilde{C}_{\ve,pm/n}(\rho, \sigma) = 1$.
\end{theorem}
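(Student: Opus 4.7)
\textit{Plan.} I would construct, from the given $n$-to-$m$ protocol $\Lambda$ of success probability $p$, a one-shot catalytic protocol of overhead $k$ by pipelining $N := \lceil n/k \rceil$ invocations so that collectively they amount to a single execution of $\Lambda$. After padding with free ancillas I may assume that $\Lambda$ consumes exactly $Nk$ copies of $\rho$ and outputs a joint $m$-register state $\Phi$, each marginal being $\ve$-close to $\sigma$, with probability $p$.

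The catalyst $\omega$ I would use carries three registers: a classical phase $T$ uniform on $\{0, 1, \ldots, N - 1\}$, a classical status $C$ carrying the $(p, 1 - p)$ mixture of $\{\mathrm{succ}, \mathrm{fail}\}$, and a quantum storage $B$ whose contents are a deterministic function of $(T, C)$---namely $\tau k$ buffered copies of $\rho$ together with (in the $\mathrm{succ}$ branch) the marginal of $\Phi$ on its last $\max(m - 1 - \tau, 0)$ output registers, and just the $\tau k$ buffered inputs in the $\mathrm{fail}$ branch. Conditioned on the value $\tau$ read off from $T$, the one-shot protocol on $\rho^{\otimes k} \otimes \omega$ would (i) deposit the $k$ fresh inputs into the input buffer; (ii) for $\tau < N - 1$ release a stored $\sigma$ to the user if the status is $\mathrm{succ}$ and $\tau \leq m - 2$, otherwise return the subnormalized no-output branch; (iii) at $\tau = N - 1$ apply $\Lambda$ to the full $Nk$-copy buffer, hand out one of its outputs, store the remaining $m - 1$, and overwrite $C$ by the success indicator of the fresh $\Lambda$ run; and (iv) advance the phase to $\tau + 1 \bmod N$.

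Catalyst recovery would follow by inspection: the uniform distribution on $T$ is invariant under the cyclic shift; the status on $C$ is either left unchanged (for $\tau < N - 1$) or redrawn from its own $(p, 1 - p)$ marginal (at the cycle boundary); and the new $B$ is pinned down by the new $(T, C)$, with each non-boundary transition merely tracing out the register of $\Phi$ corresponding to the $\sigma$ just handed out. For the success probability, I would tally the user-facing outputs: in the $\mathrm{succ}$ branch, phases $0, \ldots, m - 2$ each release a buffered $\sigma$ deterministically and phase $N - 1$ releases a fresh one with conditional probability $p$, while in the $\mathrm{fail}$ branch only phase $N - 1$ does; averaging with weights $1/N$ on phases and $(p, 1 - p)$ on statuses yields $\frac{1}{N}\bigl[p(m - 1) + p^2 + (1 - p)p\bigr] = pm/N$, and each user output inherits the $\ve$ trace-distance error from the marginals of $\Lambda$.

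The hard part is precisely the exact recovery of the catalyst at the cycle boundary $\tau = N - 1$, since a naive construction would leave $C$ correlated with the user-facing output and break strict catalyticity. My construction sidesteps this by making $B$ a deterministic function of $(T, C)$ and by overwriting $C$ using the freshly sampled success indicator of $\Lambda$, which is an independent classical bit. Setting $k = 1$ then gives $\widetilde C_{\ve, pm/n}(\rho, \sigma) \leq 1$, and the matching lower bound $\geq 1$ is immediate from the definition, yielding the claimed equality.
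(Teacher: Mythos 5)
Your construction is correct and achieves exactly the claimed bound $pm\,\lceil n/k\rceil^{-1}$, and at its core it is the same round-robin idea as the paper's proof: a catalyst that is a uniform classical mixture over $N=\lceil n/k\rceil$ stages (the paper's register $i\in\{1,\dots,g\}$ with $g=N$), each holding buffered blocks $\rho^{\ox k}$ together with the not-yet-released marginals of the \emph{noisy} batch output $\eta^m$ (you correctly store marginals of $\Phi=\eta^m$ rather than idealized copies of $\sigma$), with the batch protocol triggered only at the boundary stage and a cyclic advance of the stage label playing the role of the paper's permutations of classical and quantum registers. Where you genuinely differ is in the treatment of the ``nothing to release'' stages and of the probabilistic nature of $\Lambda$. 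The paper embeds $\sigma$ into $S^k$ with $\pi^{\ox(k-1)}$ and, crucially, places dummy free states $\theta$ in a subspace $W$ orthogonal to $S^k$, so the output wire carries $\frac{m}{g}\nu''+(1-\frac{m}{g})\tilde\theta$ and a projective measurement $\{P_{S^k},P_W\}$ supplies the factor $m/g$ without degrading fidelity, while the probabilistic step is absorbed by writing $\frac{1}{p}\cL_2$ in the classically controlled map. You instead let the catalyst's classical phase and status registers decide whether a stored output is released, declaring a subnormalized no-output branch otherwise, and you bake the $(p,1-p)$ statistics of $\Lambda$ into the status bit $C$; this dispenses with both the orthogonal-subspace embedding and the $1/p$ renormalization, and it makes the catalyst marginal exactly restored on the full, unconditioned output state even on the branch where the embedded run of $\Lambda$ fails (the emptied buffer with $C=\mathrm{fail}$ is precisely the phase-$0$ fail content), which is arguably a cleaner account of the probabilistic bookkeeping than the paper's. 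One caveat applies equally to both arguments and so does not count against you: exact catalyst return is verified before conditioning on the event that an output is actually delivered (the paper checks it on $\nu_2$/$\nu_3$ prior to its post-selecting measurement; you check it for the averaged evolution), since conditioning on success skews the phase/status distribution; this is the semantics the paper itself adopts (cf.\ its remark that the catalyst may be lost upon failure of the probabilistic step). Your tally $\frac{1}{N}\bigl[p(m-1)+p\bigr]=pm/N$, the use of $N\geq m$ (from $k\leq n/m$) to guarantee the storage is empty at the boundary, and the $k=1$ endpoint all match the theorem.
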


\begin{proof}
Let $\cL\in \sO_{\sub}$ be a resource distillation protocol that transforms $n$ copies of the source state $\rho_S$ to $m$ ($m\leq n$) copies of the target state $\sigma_S$ with success probability $p$ and within target error $\ve$. That is, there exists a quantum state $\eta^m \in \density(S^m)$, such that $\cL(\rho^{\ox n}) = p \eta^m$ and $\Delta(\eta_i^m, \sigma) \leq \ve$
for all $ i \in \{1,\cdots,m\}$, where $\eta_i^m$ is the $i$-th marginal of $\eta^m$.
Then we can consider $g = \left\lceil n/k \right\rceil \geq m$ groups of $\zeta = \rho^{\ox k}$. Since $gk = k\left\lceil n/k \right\rceil \geq n$, there exists a free operation $\cL_1$ such that $\cL_1(\zeta^{\ox g}) = \cL_1(\rho^{\ox gk}) = p \eta^m$ and $\Delta(\eta_i^m, \sigma) \leq \ve$, for any $i \in \{1,\cdots,m\}$. This can be done by simply abandoning the residual $gk-n$ copies of the source states and then performing the transformation $\cL$ on $\rho^{\ox n}$. Then we need to prove that there exists a catalytic transformation $\cL'$ that transforms one copy of $\zeta$ to one copy of $\sigma$ with success probability $p m/g$ and within target error $\ve$. An illustration for the  steps of the procedure is given in Fig.~\ref{fig:spacetime tradeoff}. 

Before the actual transformation, we need to perform two embedding steps to ensure that the systems match. First, as $\zeta \in \density(S^k)$ and $\sigma \in \density(S)$, we need to embed $\sigma$ into a larger space by using the embedding $\cE(\cdot) = (\cdot) \otimes \pi_S^{\ox (k-1)}$ where $\pi \in \sF$ is a free state. Let $\hat \eta^m = \cE^{\ox m}(\eta^m)$ and $\hat \sigma = \cE(\sigma)$. Then, since $m \leq g$, we need to do a second embedding to compensate this system mismatching by using free states. Moreover, to avoid mixing the free state with our target state which could compromise the final fidelity, we need to append the free states on an orthogonal Hilbert space so that we can distinguish them via projective measurements later. More explicitly, let $\theta$ be a free state on the system $S^k$. We embed it into a joint Hilbert space $T = S^k W$ by $\tilde \theta_T = \theta_{S^k} \ox \ket{1}\bra{1}_W$. We then embed $\zeta, \hat{\sigma} \in \density(S^k)$ into the larger space $T$ by $\tilde{\zeta}_T = \zeta_{S^k} \ox \ket{0}\bra{0}_W$ and $\tilde{\sigma}_T = \hat \sigma_{S^k} \ox \ket{0}\bra{0}_W$, respectively. Similarly, we embed $\hat{\eta}^m$ into $\density(T^m)$ by $\tilde{\eta}^m = \hat{\eta}^m \ox \ket{0}\bra{0}_{W^{m}}$.  

Then there exists a distillation protocol $\cL_2$ that transforms $\tilde \zeta^{\ox g} \in \density(T^g)$ to $\tilde \eta_m \ox \tilde \theta^{\ox g-m} \in \density(T^g)$. Let $\tilde \eta^m_{1:i}$ be the marginal state of $\tilde \eta_m$ on the first $i$ systems and $\tilde \eta^m_i$ be the marginal state on the $i$-th system. 
Define the following catalyst state on $\density(T^{g-1}F)$ with classical register $F$:
\begin{align*}
\omega := \frac1g \sum_{i=1}^{g-m-1} \tilde \zeta^{\ox i-1} \ox \tilde \eta^{m} \ox \tilde \theta^{\ox g - m - i} \ox \ket{i}\bra{i} + \frac1g \sum_{i=g-m}^{g} \tilde\zeta^{\ox i-1} \ox \tilde\eta^m_{1:g-i} \ox \ket{i}\bra{i}.
\end{align*}
Note that for $m=g$ or $m=g-1$ the first term vanishes.
We have the overall state on $\density(T^gF)$,
\begin{align*}
    \tilde \zeta \ox \omega = \frac1g \sum_{i=1}^{g-m-1} \tilde \zeta^{\ox i} \ox \tilde \eta^{m} \ox \tilde\theta^{\ox g- m - i} \ox \ket{i}\bra{i} + \frac1g \sum_{i=g-m}^{g} \tilde\zeta^{\ox i} \ox \tilde \eta^m_{1:g-i} \ox \ket{i}\bra{i} .
\end{align*}
{Let $\cP_0(\cdot):= P_0 (\cdot) P_0$ with $P_0:= \sum_{i=1}^{g-1} \ket{i}\bra{i}$ and $\cP_1(\cdot):= P_1 (\cdot) P_1$ with $P_1:= \ket{g}\bra{g}$ be two projecting maps on the flag register $F$.
We then proceed with the following steps. First, performing a classically controlled transformation $\cI \ox \cP_0 + \frac1p\cL_2 \ox \cP_1$ on $\tilde \zeta \ox \omega$, we obtain
\begin{align*}
\nu_1 := & \left(\cI \ox \cP_0 + \frac{1}{p}\cL_2 \ox \cP_1\right) (\tilde \zeta \ox \omega)\\
= & \frac1g \sum_{i=1}^{g-m-1} \tilde \zeta^{\ox i} \ox \tilde \eta^m \ox \tilde \theta^{\ox g - m - i} \ox \ket{i}\bra{i}\\
& + \frac1g \sum_{i=g-m}^{g-1} \tilde \zeta^{\ox i} \ox \tilde \eta^m_{1:g-i} \ox \ket{i}\bra{i} + \frac1g \tilde \eta^{m} \ox \tilde \theta^{\ox g-m}\ox \ket{g}\bra{g}.
\end{align*}
As explained in the proof of Theorem~\ref{thm: overheads}, this procedure is a free operation.}
Second, by cyclically permuting the classical registers of $\nu_1$  such that $i\to i+1$ and $g\to 1$, we get 
\begin{align*}
\nu_2 := 
& \frac1g \sum_{i=1}^{g-m} \tilde \zeta^{\ox i-1} \ox \tilde \eta^{m} \ox \tilde \theta^{\ox g - m - i + 1} \ox \ket{i}\bra{i}
 + \frac1g \sum_{i=g-m+1}^{g} \tilde \zeta^{\ox i-1} \ox \tilde \eta^m_{1:g-i+1} \ox \ket{i}\bra{i}.
\end{align*}
Third, permuting the quantum registers in a similar way such that  $i\to i+1$ and $g\to 1$, we get the resulting state $\nu_3$. Note that $\nu_3$ and $\nu_2$ are the same state but with different system orders.
Then we can check that the last $g-1$ quantum systems of $\nu_3$ returns the catalyst, or equivalently, we can check the first $g-1$ systems of $\nu_2$ and get
\begin{align*}
\frac1g \sum_{i=1}^{g-m} \tilde \zeta^{\ox i-1} \ox \tilde \eta^{m} \ox \tilde \theta^{\ox g - m - i} \ox \ket{i}\bra{i}
 + \frac1g \sum_{i=g-m+1}^{g} \tilde \zeta^{\ox i-1} \ox \tilde \eta^m_{1:g-i} \ox \ket{i}\bra{i}
    =   \omega.
\end{align*}
The first quantum system of $\nu_3$ or equivalently the last quantum system of $\nu_2$ gives our target state:
\begin{align*}
    \nu' = \frac1g \sum_{i=1}^{g-m} \tilde \theta + \frac1g \sum_{i=g-m+1}^g \tilde \eta^m_{g-i+1} = \frac{m}{g} \nu'' + \left(1-\frac{m}{g}\right) \tilde \theta, \quad \text{with} \quad \nu'' = \frac{1}{m}\sum_{i=1}^m \tilde \eta^m_i.
\end{align*}
Since $\nu'' = \frac{1}{m}\sum_{i=1}^m \tilde \eta^m_i = \frac{1}{m}\sum_{i=1}^m \hat \eta^m_i \ox \ket{0}\bra{0}_W$ and $\tilde \theta = \theta_{S^k} \ox \ket{1}\bra{1}_W$ are orthogonal, we can perform a projective measurement on the flag system $W$ to distinguish these two states. So we will have a probability $m/g$ to obtain the state $\nu''$. Finally, as all appended states are uncorrelated to the target system , we get $\nu''' = \cE^{-1}(\tr_W \nu'') = \frac{1}{m}\sum_{i=1}^m \eta^m_i$ by removing all the appended systems for free. Since $\Delta(\eta^m_i, \sigma) \leq \ve$, we have $\Delta(\nu''', \sigma) \leq \ve$ by the convexity of trace distance. 
This concludes the proof.
\end{proof}

\begin{figure}[ht]
    \centering
    \includegraphics[width=\textwidth]{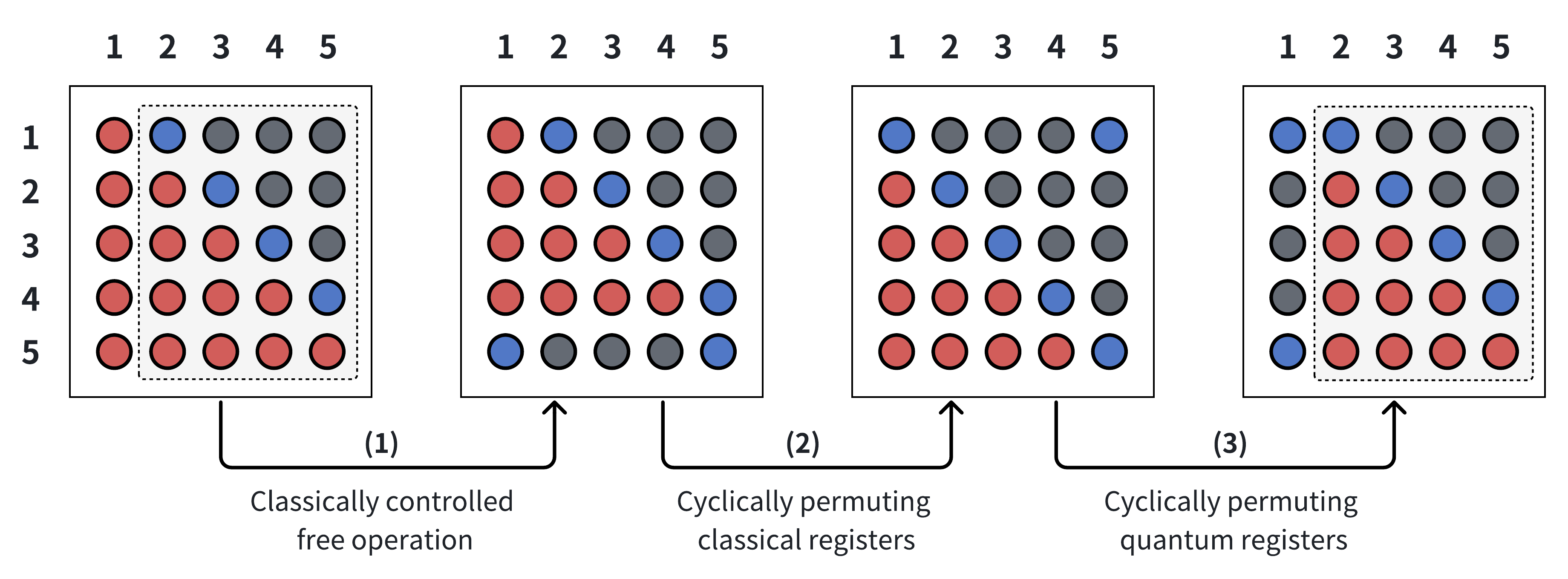}
    \caption{ An illustration for the  steps of the procedure used to prove Theorem~\ref{thm: reduce the overhead}, for the case where $n=5$, $m=2$, and $k = 1$. Each dot represents a quantum state on the system $T$. Red dots correspond to the embedded source state $\tilde \rho \in \density(T)$, while blue dots represent the embedded target state $\tilde \sigma \in \density(T)$. Grey dots represent the embedded free state $\tilde \theta \in \density(T)$, which is orthogonal to $\tilde \sigma$. The overall quantum state is a mixture of rows, each labeled with a classical register. In the first step, a classically controlled free operation is applied, transforming the last row from $\tilde \rho^{\ox 5}$ to $\tilde \sigma^{\ox 2} \ox \tilde \theta^{\ox 3}$. The second step involves cyclically permuting the classical registers (i.e., the rows of dots), and the third step involves cyclically permuting the quantum registers (i.e., the columns of dots). The dashed boxes highlight the catalyst state, showing that it remains unchanged before and after the transformation.}
    \label{fig:spacetime tradeoff}
\end{figure}

Some remarks concerning our above result are in order. First, {the catalyst construction extends beyond the standard ones in Refs.~\cite{duan2005multiple,char2023catalytic,datta2023there,Kondra2021PRL,Lipka-Bartosik2021,Shiraishi2021,Takagi2022PRL,Wilming2021PRL,Bavaresco2025}.} Appending the free state $\theta$ on the orthogonal subspace is crucial here, as it allows the target state to be post-selected after the projective measurement, ensuring that the fidelity is not compromised in the final step. Second, the catalyst used in the proof is not unique. Other options could also work in the same manner, such as
\begin{align*}
\omega = \frac1g \sum_{i=1}^{g-m} \tilde \zeta^{\ox i-1} \ox \tilde \eta_{1:m-1}^{m} \ox \tilde \theta^{\ox g - m - i+1} \ox \ket{i}\bra{i} + \frac1g \sum_{i=g-m+1}^{g} \tilde\zeta^{\ox i-1} \ox \tilde\eta^m_{1:g-i} \ox \ket{i}\bra{i}.
\end{align*}
Interestingly, such a catalyst is independent on the target state $\tilde \eta^m$ if $m = 1$. {Third, if $p=1$, the catalyst is always returned on the marginal before the post-selection on the target system. So its status is preserved by ignoring the measurement outcome and therefore can be reused in the subsequent distillation rounds.}

Moreover, it should be noted that trading success probability for overhead is not always possible without the use of a catalyst, which underscores the significance of Theorem~\ref{thm: reduce the overhead}. Consider an example task of entanglement distillation where the goal is to transform  a noisy entangled state $\rho$ (e.g., an isotropic state) into a Bell state $\psi$ with success probability $p'$ and target error $\ve'$. There exist nontrivial forbidden tuples $(p', \ve')$  according to the no-go theorems established in Refs.~\cite{Fang2020PRL,PhysRevLett.128.110505}, which are nevertheless achievable in the average case using the hashing protocol~\cite{devetak2005distillation}: $\rho^{\otimes mr}$ can be transformed into $\psi^{\otimes m}$ with arbitrarily small target error and success probability $1$ for certain $r$ and $m$, where $1/r$ corresponds to the hashing bound. This shows that achieving a multi-copy transformation does not necessarily guarantee the feasibility of a one-shot transformation by simply compromising the success probability, even when $p'$ is chosen to be close to zero. Yet, this trade-off can always be achieved with the aid of a catalyst by our result.

As mentioned, very recent works \cite{wills2024constantoverheadmagicstatedistillation,golowich2024asymptotically,nguyen2024good} demonstrated that magic state distillation can be achieved with constant overhead in the asymptotic limit but left the question of optimality of this constant open for future investigation. {In particular, the protocol in Ref.~\cite{wills2024constantoverheadmagicstatedistillation} achieves $p=1$ and $n/m = O(1)$.} By applying Theorem~\ref{thm: reduce the overhead}, we find that the optimal constant can be ultimately reduced to $1$ with the aid of a catalyst, at the expense of compromising the success probability by a constant factor.

\begin{corollary}
There exist one-shot catalytic magic state distillation protocols that achieve any given target error with constant success probability using only one copy of the source magic state {(i.e., constant overhead $1$)}. {In particular,  by Theorem~\ref{thm: catalyst reuse}, the catalyst can be reused indefinitely without degradation in its performance.}
\end{corollary}

\subsection{Channel catalytic transformation theory}

The concept of catalytic transformation can also be extended to the manipulation of quantum channels, which correspond to dynamical resources that play a crucial role in e.g.~quantum communication and quantum error correction \cite{Bennett_2014,gour2019comparison,LiuWinter2018,LiuYuan:channel,fang2018quantum,fang2019quantum,gour2020dynamicalresources,FangLiuPRXQ,RegulaTakagi21}. 
In this section, we extend the catalysis technique to the channel setting. 

Let $L(A\to B)$ be the set of all linear maps from Hilbert space $A$ to Hilbert space $B$. Let $\CPTP(A\to B)$ be the set of all quantum channels (i.e., completely positive and trace-preserving maps) from $A$ to $B$. The most general transformation of a quantum channel is described by a superchannel, which is a linear map that maps quantum channels to quantum channels and has a physical realization with pre- and post-processing on the quantum channel upon which it acts. More specifically, $\Pi$ is a superchannel~\cite{gour2019comparison} if there exists a Hilbert space $E$ with $|E| \leq |A_0||B_0|$ and two CPTP maps $\Gamma_{\text{pre}} \in \CPTP(B_0\to A_0E)$ and $\Gamma_{\text{post}} \in \CPTP(A_1E\to B_1)$ such that for all linear map $\Psi \in L(A_0\to A_1)$, it holds that $\Pi[\Psi] = \Gamma_{\text{post}} \circ(\Psi \ox \cI_E) \circ \Gamma_{\text{pre}}$.

Before addressing catalytic channel manipulation, we first need to clarify the definition of a reduced channel. It is known that if a bipartite channel satisfies the non-signaling (NS) condition $\tr_{B_2} \circ \cN_{A_1A_2 \to B_1 B_2} = \tr_{B_2} \circ \cN_{A_1A_2 \to B_1 B_2}\circ \cR^\pi_{A_2}$ where $\cR^\pi$ is a replacer channel that sends any input to a fixed state $\pi$, then we can define a unique channel such that $\tr_{B_2} \circ \cN_{A_1A_2\to B_1B_2} = \cN_{A_1\to B_1} \circ \tr_{A_2}$. This channel is given by
\begin{align}\label{eq: reduced channel 1} \cN_{A_1\to B_1}(\cdot) := \tr_{B_2} \cN_{A_1A_2\to B_1B_2}((\cdot) \ox \pi_{A_2}). 
\end{align}
By the NS assumption, it can be easily verified that the definition in~\eqref{eq: reduced channel 1} is independent of the choice of $\pi$. Therefore, without loss of generality, we can always take $\pi$ to be the maximally mixed state of appropriate dimension. See also discussions in Ref.~\cite[Section 2.2]{berta2022semidefinite}.
More generally, if the channel $\cN_{A_1A_2\to B_1B_2}$ does not satisfy the NS condition, we can still define a linear map via~\eqref{eq: reduced channel 1}, which is clearly a quantum channel from $A_1 \to B_1$. The only difference is that the channel will depend on the choice of $\pi$.

 Define the error between quantum channels $\cN$ and $\cM$ to be $\Delta(\cN,\cM):= \frac{1}{2}\|\cN-\cM\|_{\diamond}$  where $\|\cE\|_\diamond:= \sup_{k\in \NN} \sup_{\|X\|_1\leq 1} \|\cE\ox \cI_k(X)\|_1$ denotes the diamond norm~\cite{kitaev1997quantum}. We formally characterize catalytic channel transformations as follows.

\begin{definition}
Let $\cN \in \CPTP(A\to B)$ and $\cM \in \CPTP(A'\to B')$ be two quantum channels. We say $\cN$ can be catalytically transformed to $\cM$ with target error $\ve$ if there exists a free superchannel $\Pi$ and a channel $\cC \in \CPTP(X \to Y)$ such that the output channel $\cP := \Pi_n [\cN \ox \cC_n] \in \CPTP(A'X \to B'Y)$ satisfies
\begin{align}
\cP_{X\to Y} = \cC_{X \to Y},\quad \text{and}\quad
\Delta(\cP_{A'X\to B'Y}, \cM_{A'\to B'} \ox \cC_{X\to Y}) \leq \ve,\label{eq: catalytic channel trans 1}
\end{align}
where $\cP_{X\to Y}(\cdot) := \tr_{B'} \cP_{A'X\to B'Y}((\cdot) \ox \pi_{A'})$.
Moreover, we say  $\cN$ can be (asymptotically) catalytically transformed to $\cM$  if there exists a sequence of superchannels $\Pi_n$ and a sequence of catalyst channel $\cC_n$ such that $\cN^{\ox n}$ can be catalytically transformed to $\cM^{\ox n}$ with error $\ve_n$ and $\lim_{n\to \infty} \ve_n = 0$. 
\end{definition}
The first condition in~\eqref{eq: catalytic channel trans 1} implies that after the coding strategy $\Pi$, the transmission scheme $\cP_{X\to Y}(\cdot)$ functions the same as the channel $\cC$. That is, the functioning of $\cC$ is not affected by the coding, thus it serves as a catalyst channel. The second condition in~\eqref{eq: catalytic channel trans 1} implies that the correlation between the source channel and the catalyst channel is within a target error. This particularly implies that the reduced channel $\cP_{A'\to B'} (\cdot) := \tr_Y \cP_{A'X\to B'Y}((\cdot) \ox \pi_X)$
will function approximately as $\cM$, in the sense that $\Delta(\cP_{A'\to B'}, \cM_{A'\to B}) \leq  \ve$. This can be checked as follows:
\begin{align}
 \|\cP_{A'\to B'} - \cM_{A'\to B'}\|_\diamond & = \sup_{\rho_{RA'}} \|\cP_{A'\to B'}(\rho_{RA'}) - \cM_{A'\to B'}(\rho_{RA'})\|_1 \notag\\
& = \sup_{\rho_{RA'}} \|\tr_Y \cP_{A'X\to B'Y}(\rho_{RA'} \ox \pi_X) - \cM_{A'\to B'}(\rho_{RA'})\|_1\notag\\
 & = \sup_{\rho_{RA'}} \|\tr_Y \left[\cP_{A'X\to B'Y}(\rho_{RA'} \ox \pi_X) - \cM_{A'\to B'}(\rho_{RA'}) \ox \cC_{X\to Y}(\pi_X)\right]\|_1\notag\\
 & \leq \sup_{\rho_{RA'}} \|\cP_{A'X\to B'Y}(\rho_{RA'} \ox \pi_X) - \cM_{A'\to B'}(\rho_{RA'}) \ox \cC_{X\to Y}(\pi_X)\|_1\notag\\
 & \leq \sup_{\rho_{RA'X}} \|\cP_{A'X\to B'Y}(\rho_{RA'X}) - \cM_{A'\to B'}\ox \cC_{X\to Y} (\rho_{RA'X})\|_1\notag\\
 & = \|\cP_{A'X\to B'Y} - \cM_{A'\to B'}\ox \cC_{X\to Y}\|_\diamond,\label{eq: diamond norm partial trace}
\end{align}
where the first inequality follows by the data-processing inequality of trace norm and the second inequality follows by relaxing the input state to all quantum states on $RA'X$.

With the above definition, we are now ready to discuss quantum channel simulation. By the well-known quantum reverse Shannon theorem~\cite{Bennett_2014}, we know that channel manipulation under entanglement-assisted or NS-assisted codes is reversible. This, in particular, implies that $\cN$ can be asymptotically transformed to $\cM$ with unit rate if and only if $I(\cN) \geq I(\cM)$, where $I(\cN):= \sup_{\varphi_{RA}} I(B:R)_{\cN_{A\to B}(\varphi_{RA})}$ 
is the quantum channel mutual information and $I(B:R)_\rho = S(\rho_B) + S(\rho_R) - S(\rho_{BR})$ with $S(\rho)=-\tr \rho \log \rho$,
giving an operational meaning of channel mutual information in the conventional i.i.d. setting. 

Extending the catalysis technique for quantum states, we obtain the following result which shows that $\cN$ can be catalytically transformed to one $\cM$ if and only if $I(\cN) \geq I(\cM)$ given free entanglement and catalyst channels, thereby addressing the channel version of the open question raised in Ref.~\cite{Wilming2021PRL} regarding whether mutual information determines the convertibility of quantum channels.  Moreover, this endows the channel mutual information with an operational meaning in terms of catalytic transformation at the one-shot level, enriching our understanding of quantum catalysis from a fundamental information-theoretic perspective. 

\begin{theorem}\label{thm: app mutual information}
Let $\cN,\cM \in \CPTP(A\to B)$ be two quantum channels. Then $\cN$ can be catalytically transformed to $\cM$ with entanglement-assisted or NS-assisted codes if and only if $I(\cN) \geq I(\cM)$.
\end{theorem}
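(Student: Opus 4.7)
The plan is to handle the two directions separately, with the ``if'' direction obtained by combining the quantum reverse Shannon theorem with a channel analogue of the catalyst construction from Theorem~\ref{thm: app overheads}, and the ``only if'' direction following from additivity plus monotonicity of the channel mutual information.

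For the ``if'' direction ($I(\cN) \geq I(\cM) \Rightarrow$ catalytic transformation), I would start from the entanglement-assisted/NS-assisted quantum reverse Shannon theorem: whenever $I(\cN) \geq I(\cM)$, for every $\delta > 0$ there exists $N$ (and a free superchannel) such that $\cN^{\otimes N}$ simulates $\cM^{\otimes N}$ within diamond-norm error $\delta$ (e.g.\ by using $\lceil N \cdot I(\cM)/I(\cN)\rceil$ of the $\cN$'s to simulate $\cM^{\otimes N}$ and discarding the rest, and handling equality by a vanishing rate slack). I would then mimic the catalyst from Eq.~\eqref{eq: catalyst}, but with channels, defining
\begin{equation*}
\cC \;=\; \frac{1}{N}\sum_{i=1}^{N} \cN^{\otimes (i-1)} \otimes \cM^{\otimes (N-i)} \otimes \proj{i},
\end{equation*}
where $\proj{i}$ is a classical flag on a register. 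Feeding in the single input channel $\cN$ alongside $\cC$ gives $\cN \otimes \cC$, which on flag $i=N$ contains $\cN^{\otimes N}$ and on other flags contains an $\cN^{\otimes i}\otimes \cM^{\otimes N-i}$ block. I would then apply, as in the proof of Theorem~\ref{thm: app overheads}, a three-step procedure: (i) a classically controlled simulation that, conditioned on flag $i=N$, converts the $\cN^{\otimes N}$ block to $\cM^{\otimes N}$ (incurring error $\delta$); (ii) a cyclic permutation of the classical flags ($i\mapsto i+1$, $N\mapsto 1$); (iii) a cyclic permutation of the quantum wires in the same pattern. After these operations the last $N-1$ channel slots are exactly $\cC$, and the first slot is $\cM$ up to error $\delta$. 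Concatenating $n$ independent copies of this construction (or working with $\cN^{\otimes n}$ from the start) produces the required sequence with $\ve_n\to 0$.

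For the ``only if'' direction, I would invoke three standard properties of $I(\cdot)$: additivity $I(\cA\otimes\cB)=I(\cA)+I(\cB)$ (known for the entanglement-assisted/NS mutual information), monotonicity $I(\Pi(\cA))\leq I(\cA)$ under any entanglement-assisted or NS superchannel $\Pi$, and a diamond-norm continuity bound $|I(\cA)-I(\cB)|\leq \delta(\|\cA-\cB\|_\diamond)$ with $\delta(x)\to 0$ as $x\to 0$ (an Alicki--Fannes-type estimate for channels). Given the catalytic sequence $\cP_n$ obtained from $\cN^{\otimes n}\otimes\cC_n$ with $\|\cP_n - \cM^{\otimes n}\otimes\cC_n\|_\diamond \leq 2\ve_n$, monotonicity gives $I(\cP_n)\leq n I(\cN)+I(\cC_n)$, continuity gives $n I(\cM)+I(\cC_n) = I(\cM^{\otimes n}\otimes\cC_n)\leq I(\cP_n)+\delta(2\ve_n)$, and subtracting $I(\cC_n)$ (the dangerous term cancels exactly thanks to additivity), dividing by $n$, and sending $n\to\infty$ with $\ve_n\to 0$ yields $I(\cM)\leq I(\cN)$.

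The main obstacle is the ``if'' direction: realizing the classically controlled simulation plus cyclic permutations as a legitimate free superchannel acting on $\cN\otimes\cC$ and verifying that the catalyst is exactly preserved as a reduced channel. One has to be careful with the definition of the reduced channel, since $\cN\otimes\cC$ need not be non-signalling between its blocks, so the reference state $\pi$ used in Eq.~\eqref{eq: reduced channel 1} matters; the construction should be checked to give back $\cC$ regardless of $\pi$ by exploiting the block-diagonal classical register structure. The error bound $\Delta(\cP_{A'X\to B'Y},\cM\otimes\cC)\leq \delta$ on the joint channel then transfers to the reduced target channel via the partial-trace bound \eqref{eq: diamond norm partial trace}. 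A secondary nuisance is handling the equality case $I(\cN)=I(\cM)$ in the reverse Shannon theorem, which I would address by a standard rate-padding argument (simulate at rate $1-\delta_n$ with $\delta_n\to 0$ and append a fixed free reference channel on the unused wires).
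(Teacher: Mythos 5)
Your overall strategy matches the paper's (quantum reverse Shannon theorem plus a channel analogue of the block-catalyst construction for the ``if'' direction; monotonicity, continuity and additivity of $I(\cdot)$ for the ``only if'' direction, which is essentially identical to the paper's argument). However, there is a genuine gap in the ``if'' direction, precisely at the point you yourself flag as the main obstacle: with your catalyst
$\cC = \frac{1}{N}\sum_{i=1}^{N} \cN^{\ox (i-1)} \ox \cM^{\ox (N-i)} \ox \proj{i}$,
built from \emph{ideal} copies of $\cM$, the catalyst is \emph{not} exactly returned. After the classically controlled simulation on the flag $i=N$, that branch carries the actual simulating channel $\cP^N = \Pi_N(\cN^{\ox N})$, which is only $\delta$-close to $\cM^{\ox N}$. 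After the two cyclic permutations, the catalyst slots on the branch that underwent simulation carry the reduced channel $\cP^N_{1:N-1}$, not $\cM^{\ox (N-1)}$, so the marginal of the output on the catalyst register equals $\cC$ only up to error $\delta$. This violates the first condition of the paper's definition, $\cP_{X\to Y} = \cC_{X\to Y}$, which is an exact equality (and relaxing it to an approximate equality is not harmless: approximate catalyst return opens the door to embezzlement-type trivializations, which is why the exactness is built into the definition).

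The fix is the one the paper uses, and it mirrors the state case that you cite: in Eq.~\eqref{eq: catalyst} the catalyst is built from the marginals $\hat\eta^m_{1:m-i}$ of the \emph{actual} (approximate) output, not from copies of the ideal target $\hat\sigma$. The channel analogue is to define the reduced channels $\cP^n_{1:i}(\cdot) := \tr_{B_{i+1:n}} \cP^n((\cdot)\ox \pi_{A_{i+1:n}})$ of the simulating channel and take
$\cC_n = \frac{1}{n}\sum_{k=1}^{n} \cN^{\ox(k-1)} \ox \cP^n_{1:n-k} \ox \ketbra{k}{k}$.
With this choice the catalyst marginal is returned exactly (the classical flag structure makes the reduced-channel computation unambiguous, addressing your concern about the reference state $\pi$), and the imperfection of the simulation shows up only in the bound $\|\cG_3 - \cM \ox \cC_n\|_\diamond \leq 2\ve$, obtained by a telescoping triangle-inequality argument comparing each $\cP^n_{1:j}$ to $\cM^{\ox j}$. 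Your ``only if'' direction and your treatment of the boundary case $I(\cN)=I(\cM)$ by rate padding are fine and consistent with the paper's (admittedly terse) argument; the only substantive correction needed is the choice of catalyst above.
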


The proof of this result requires the following lemmas.

\begin{lemma}\label{lem: DPI of I}
For any quantum channel $\cN$ and superchannel $\Pi$, the output channel $\Pi(\cN)$ satisfies $ I(\Pi(\cN)) \leq I(\cN)$.
\end{lemma}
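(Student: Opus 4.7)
}
The plan is to leverage the Chiribella--D'Ariano--Perinotti realization of superchannels, which guarantees a decomposition
\[
\Pi(\cN) = \cD_{BE\to B'} \circ (\cN_{A\to B} \otimes \id_E) \circ \cE_{A'\to AE}
\]
for some auxiliary system $E$ and CPTP pre-processor $\cE$ and post-processor $\cD$, and then to chain the standard data-processing inequality (DPI) of the quantum mutual information through the three stages of this decomposition.

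First I would fix an arbitrary input state $\rho_{RA'}$, which may be taken pure without loss of generality since $I(B':R)_{\Pi(\cN)(\rho)}$ is maximized on pure inputs after enlarging the reference, and propagate it through the circuit while recording the intermediate states
\[
\sigma_{RAE} := (\id_R \otimes \cE)(\rho_{RA'}),\quad \tau_{RBE} := \cN_{A\to B}(\sigma_{RAE}),\quad \omega_{RB'} := \cD_{BE\to B'}(\tau_{RBE}) = \Pi(\cN)(\rho_{RA'}).
\]
The key observation is that $\sigma_{RAE}$ is a valid input to $\cN$ when viewed with the combined reference $RE$; by the very definition of $I(\cN)$ as a supremum over all reference systems and inputs, this immediately yields $I(B:RE)_\tau \leq I(\cN)$. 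On the other hand, applying the DPI of mutual information to the decoder $\cD$, which acts only on $BE$ and leaves the external reference $R$ untouched, gives $I(B':R)_\omega \leq I(BE:R)_\tau$.

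The remaining step is to bridge these two bounds via $I(BE:R)_\tau \leq I(B:RE)_\tau$. A direct chain-rule expansion produces
\[
I(BE:R)_\tau - I(B:RE)_\tau = I(E:R\mid B)_\tau - I(E:B\mid R)_\tau,
\]
whose sign is not immediate. To close this gap I would exploit that $\cN$ leaves $E$ invariant, so $\tau_{RE}=\sigma_{RE}$, and Stinespring-dilate $\cE$, $\cN$, and $\cD$ so that the global state is pure; on that purification, strong subadditivity combined with the pure-state duality $S(X)=S(X^c)$ is expected to furnish the desired inequality among tripartite marginals. Granting this, one chains $I(B':R)_\omega \leq I(BE:R)_\tau \leq I(B:RE)_\tau \leq I(\cN)$, and taking the supremum over $\rho_{RA'}$ concludes $I(\Pi(\cN))\leq I(\cN)$. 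I anticipate this pure-state entropy manipulation to be the principal technical obstacle --- the surrounding DPI chain is routine, but controlling the residual correlation between the reference $R$ and the ancilla $E$ through the pre-processor $\cE$ is what truly encodes the no-bypass property of the free superchannel.
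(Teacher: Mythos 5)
Your chain breaks at the bridging step $I(BE:R)_\tau \leq I(B:RE)_\tau$, and this gap cannot be closed at the level of generality you are working at. With the unrestricted Chiribella--D'Ariano--Perinotti realization, the memory $E$ is allowed to carry the channel input itself from the pre-processor to the post-processor. Concretely, let $\cE_{A'\to AE}$ prepare a fixed state $\nu_A$ and relabel $A'$ as $E$, and let $\cD_{BE\to B'} = \tr_B$ followed by relabeling $E$ as $B'$; this composite is a legitimate superchannel, and applied to a replacer channel $\cN(\cdot)=\nu'_B\,\tr(\cdot)$ it gives $\Pi(\cN)=\cI$, the identity channel. Feeding a maximally entangled $\rho_{RA'}$ into your construction yields $\tau_{RBE}=\nu'_B\ox\Phi_{RE}$, for which $I(BE:R)_\tau = 2\log d$ while $I(B:RE)_\tau = 0$, so the bridging inequality fails maximally --- and indeed $I(\Pi(\cN)) = 2\log d > 0 = I(\cN)$, so the monotonicity you are trying to prove is itself false for arbitrary superchannels with a memory side-channel. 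No purification/SSA manipulation can rescue the step; the obstruction is exactly the input-dependent correlation between $E$ and $R$ that your last paragraph flags as the "principal technical obstacle." The lemma is valid (and is used in the paper) only for the free superchannels of the relevant coding class --- entanglement-assisted or non-signalling codes --- where the system handed to the decoder is restricted to a share of pre-shared, input-independent correlation, so that no information can bypass $\cN$. The paper's own proof is the operational one-liner (channel mutual information equals the entanglement-assisted capacity, which cannot increase under free pre/post-processing), and that is precisely where this restriction enters; your decomposition discards it.

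Your entropic route can be repaired once that structure is imported. For an entanglement-assisted code, $E=E_B$ is Bob's half of a fixed state $\phi_{E_AE_B}$, and neither the encoder (acting on $A'E_A$) nor $\cN$ (acting on $A$) touches $RE_B$, so $\tau_{RE_B}=\rho_R\ox\phi_{E_B}$ is a product state. The chain rule then gives
\begin{align*}
I(BE_B:R)_\tau \;=\; I(E_B:R)_\tau + I(B:R|E_B)_\tau \;=\; I(B:R|E_B)_\tau \;\leq\; I(B:E_B)_\tau + I(B:R|E_B)_\tau \;=\; I(B:RE_B)_\tau,
\end{align*}
after which your two DPI steps $I(B':R)_\omega \leq I(BE_B:R)_\tau$ and $I(B:RE_B)_\tau \leq I(\cN)$ go through exactly as written. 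For non-signalling-assisted codes the product structure is not available in this simple form, and one either argues via the NS constraints directly or falls back on the operational capacity argument the paper cites.
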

\begin{proof}
This can be understood from the existing operational interpretation of a channel's mutual information, which represents its entanglement-assisted quantum capacity. Since the channel $\Pi(\cN)$ is noisier than $\cN$, it results in reduced communication capability. A similar argument is provided in Ref.~\cite[Remark 5]{fang2019quantum}.
\end{proof}

\vspace{0.2cm}
Moreover, the quantum channel mutual information is continuous with respect to the diamond norm.

\begin{lemma}\label{lem: continuity of I}
Let $\cN, \cM \in \CPTP(A\to B)$ and $d_{AB}$ be the dimension of $A\ox B$. If $\|\cN - \cM\|_\diamond \leq \ve$, it holds that
\begin{align*}
    |I(\cN) - I(\cM)| \leq 3\ve \log d_{AB} + 3 h_2(\ve):= f(\ve).
\end{align*}
\end{lemma}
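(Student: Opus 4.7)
The plan is to reduce the problem to a continuity estimate for quantum mutual information on output states, and then apply a Fannes--Audenaert (or Alicki--Fannes--Winter) type bound to each entropy term. First I would note that, in the definition $I(\cN) = \sup_{\varphi_{RA}} I(B:R)_{\cN_{A\to B}(\varphi_{RA})}$, the supremum is attained on pure states with $\dim R = d_A$ (by concavity/convexity and purification), so it is enough to control the mutual information uniformly over such $\varphi_{RA}$. For any fixed pure $\varphi_{RA}$, set $\rho_{RB} := (\id_R \ox \cN_{A\to B})(\varphi_{RA})$ and $\sigma_{RB} := (\id_R \ox \cM_{A\to B})(\varphi_{RA})$. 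The assumption $\|\cN - \cM\|_\diamond \leq \ve$ together with the definition of the diamond norm gives $\|\rho_{RB} - \sigma_{RB}\|_1 \leq \ve$, and hence by monotonicity under partial trace also $\|\rho_B - \sigma_B\|_1 \leq \ve$ and $\|\rho_R - \sigma_R\|_1 \leq \ve$ (in fact $\rho_R = \sigma_R = \varphi_R$, since neither channel acts on $R$).

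Next I would expand the mutual information as $I(B:R) = S(B) + S(R) - S(BR)$ and use the triangle inequality on the difference of mutual informations to reduce to three entropy differences evaluated at states whose trace distances are bounded by $\ve$. I would then invoke a continuity bound for the von Neumann entropy of the form $|S(\tau) - S(\tau')| \leq \ve \log d + h_2(\ve)$ whenever $\|\tau - \tau'\|_1 \leq \ve$ on a $d$-dimensional system. Applied to the three marginals, this gives bounds with dimensions $d_B$, $d_R = d_A$, and $d_R d_B \leq d_{AB}$, all of which are $\leq d_{AB}$. Summing the three contributions yields $|I(B:R)_\rho - I(B:R)_\sigma| \leq 3\ve \log d_{AB} + 3 h_2(\ve)$ for every fixed $\varphi_{RA}$.

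Finally, I would pass to the supremum. Since the estimate above holds uniformly in $\varphi_{RA}$, taking $\sup_\varphi$ on each side (using $\sup f - \sup g \leq \sup (f-g)$ and symmetry) yields $|I(\cN) - I(\cM)| \leq 3\ve \log d_{AB} + 3 h_2(\ve) = f(\ve)$, as desired.

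There is no real obstacle here; the argument is essentially a book-keeping exercise. The only subtlety is to make sure the dimension factors in the entropy continuity bound are correctly bounded by $\log d_{AB}$ (noting in particular that $d_R$ can be taken equal to $d_A$ without loss of generality because mutual information is maximized on pure inputs, so enlarging $R$ never helps). A sharper bound could be obtained by exploiting $\rho_R = \sigma_R$ (eliminating one of the three entropy differences) and by using the Audenaert or Winter form of the continuity inequality, but the stated inequality only requires the straightforward three-term bound.
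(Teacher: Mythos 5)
Your proposal is correct and follows essentially the same route as the paper's proof: bound the output states' trace distance via the diamond norm, apply the Fannes entropy continuity bound to the three entropy terms of $I(B:R)$ with all dimensions bounded by $d_{AB}$, and then pass to the supremum (the paper uses $\delta$-near-optimal inputs where you use attainment and $\sup f - \sup g \leq \sup(f-g)$, a purely cosmetic difference). Your observation that $\rho_R = \sigma_R$ would even sharpen the constant, but the stated bound is all that is needed.
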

\begin{proof}
For any $\delta > 0$, choose $\varphi_1$ and $\varphi_2$ be such that $|I(\cN) - I(B:R)_{\cN(\varphi_1)}| \leq \delta$ and $|I(\cM) - I(B:R)_{\cN(\varphi_2)}| \leq \delta$ where $d_R = d_A$. Since $\|\cN - \cM\|_\diamond \leq \ve$, we have $\|\cN(\varphi_1) - \cM(\varphi_1)\|_1 \leq \ve$ and $\|\cN(\varphi_2) - \cM(\varphi_2)\|_1 \leq \ve$. Recall the Fannes inequality $|S(\rho)-S(\psi)| \leq \ve \log d + h_2(\ve)$ if $\|\rho-\psi\|_1 \leq \ve$. We have the continuity of the quantum mutual information as 
\begin{align*}
|I(B:R)_\rho - I(B:R)_\psi| & \leq |S(\rho_B) -S(\psi_B)| + | S(\rho_R) - S(\psi_{R})| +  |S(\rho_{BR}) - S(\psi_{BR})|\\
& \leq 3\ve \log d_{AB} + 3 h_2(\ve):=f(\ve).
\end{align*}
if $\|\rho_{BR} - \psi_{BR}\|_1 \leq \ve$.
Putting everything together, we have
\begin{align*}
& I(\cN) - \delta \leq I(B:R)_{\cN(\varphi_1)} \leq I(B:R)_{\cM(\varphi_1)} + f(\ve) \leq I(\cM) + f(\ve),\\
& I(\cM) - \delta \leq I(B:R)_{\cM(\varphi_2)} \leq I(B:R)_{\cN(\varphi_2)} + f(\ve) \leq I(\cN) + f(\ve).
\end{align*}
So we get $|I(\cN) - I(\cM)| \leq \delta + f(\ve)$. Since $\delta$ can be made arbitrarily small, we get $|I(\cN) - I(\cM)| \leq  f(\ve)$.
\end{proof}

\vspace{0.2cm}
We are now ready to prove Theorem~\ref{thm: app mutual information}.

\vspace{0.2cm}
\begin{proof}
We first show the ``only if'' direction. Suppose $\cN$ can be catalytically transformed to $\cM$. For any $\ve > 0$, there exists a catalyst channel $\cC_n$ and a code $\Pi_n$ such that $\Delta(\Pi_n (\cN \ox \cC_n),\cM \ox \cC_n) \leq \ve$. By the continuity of the channel mutual information and the monotonicity in Lemma~\ref{lem: continuity of I} and Lemma~\ref{lem: DPI of I}, we get
\begin{align*}
I(\cM \ox \cC_n) & \leq I(\Pi_n (\cN \ox \cC_n)) + f(2\ve) \leq I(\cN \ox \cC_n) + f(2\ve).
\end{align*}
Then by the additivity of channel mutual information, we get $I(\cM) \leq I(\cN) + f(2\ve)$. Since the inequality holds for arbitrary $\ve$, we get $I(\cM) \leq I(\cN)$.  

Then we prove the ``if'' direction. That is, if $I(\cN) \geq I(\cM)$ we can find a sequence of $\Pi_n$ and $\cC_n$ that catalytically transform $\cN$ to $\cM$. By the reversibility of the channel simulation, we know that for any given $\ve$, there exists entanglement-assisted or NS-assisted transformations such that 
\begin{align*}
    \Pi_n(\cN^{\ox n}) = \cP^n \quad \text{and} \quad \|\cP^n - \cM^{\ox n}\|_\diamond \leq \ve.
\end{align*}
Note that $\cP^n \in \CPTP(A^n \to B^n)$. Define the reduced channel on the first $i$ systems as
\begin{align*}
    \cP^n_{1:i} (\cdot) := \tr_{B_{i+1:n}} \cP^n_{A^n \to B^n}((\cdot) \ox \pi_{A_{i+1:n}}),
\end{align*}
where $\cP^n_{1:n} := \cP^n$.
Now consider a catalyst channel 
\begin{align*}
    \cC_n := \frac{1}{n} \sum_{k=1}^n \cN^{\ox k-1} \ox \cP^n_{1:n-k} \ox \ket{k}\bra{k}.
\end{align*}
Then the overall channel is given by
\begin{align*}
    \cN \ox \cC_n = \frac{1}{n} \sum_{k=1}^n \cN^{\ox k} \ox \cP^n_{1:n-k} \ox \ket{k}\bra{k}.
\end{align*}
Let $\cQ_0(\cdot):= Q_0 (\cdot) Q_0$ with $Q_0:= \sum_{i=1}^{n-1} \ket{i}\bra{i}$ and $\cQ_1(\cdot):= Q_1 (\cdot) Q_1$ with $Q_1:= \ket{n}\bra{n}$ be two projecting maps on the flag register $F$.
First, performing a classically controlled transformation $\cI \ox \cQ_0 + \Pi_n \ox \cQ_1$, we get
\begin{align*}
    \cG_1 = \frac{1}{n} \sum_{k=1}^{n-1} \cN^{\ox k} \ox \cP^n_{1:n-k} \ox \ket{k}\bra{k} + \frac{1}{n} \cP^n \ox \ket{n}\bra{n}.
\end{align*}
Next, we relabel the classical registers $i\to i+1$ and $n \to 1$ and get
\begin{align*}
    \cG_2 = \frac{1}{n} \sum_{k=1}^{n} \cN^{\ox k-1} \ox \cP^n_{1:n-k+1} \ox \ket{k}\bra{k}.
\end{align*}
Then performing a cyclic permutation $\cS_{A^n}$ and $\cS_{B^n}$ on the input and the output, respectively, such that $i\to i+1$ and $n \to 1$, we get
\begin{align*}
    \cG_3 = \frac{1}{n} \sum_{k=1}^{n} \cS_{B^n} \circ (\cN^{\ox k-1} \ox \cP^n_{1:n-k+1}) \circ \cS_{A^n} \ox \ket{k}\bra{k}.
\end{align*}
Then we claim that $\cG_3$ returns $\cC_n$ on its marginal and $\|\cG_3 - \cM\ox \cC_n\|_\diamond \leq 2\ve$. The first claim is equivalent to check the first $n-1$ reduced channel of $\cG_2$, 
\begin{align*}
    \tr_{B_n} \circ \; \cG_2 ((\cdot) \ox \pi_{A_n}) & = \frac{1}{n} \sum_{k=1}^{n} \tr_{B_n}  \cN^{\ox k-1} \ox \cP^n_{1:n-k+1} ((\cdot) \ox \pi_{A_n}) \ox \ket{k}\bra{k} \\
    & = \frac{1}{n} \sum_{k=1}^{n}  \cN^{\ox k-1} \ox \tr_{B_n} \cP^n_{1:n-k+1} ((\cdot) \ox \pi_{A_n}) \ox   \ket{k}\bra{k} \\
    & = \frac{1}{n} \sum_{k=1}^{n}  \cN^{\ox k-1} \ox \cP^n_{1:n-k} (\cdot)\ox \ket{k}\bra{k} \\
    & = \cC_n.
\end{align*}
Note that the second last equality holds by the definition of $\cP^n_{i}$ and here $\pi_{A_n}$ and $\tr_{B_n}$ are effectively acting on the last systems of the channel $\cP_{1:n-k+1}^n$, which is also the $n-k+1$ system in the original definition of $\cP^n_{1:n-k+1}$.
Since the diamond norm is invariant under system permutation, the second claim is equivalent to check $\|\cG_2 - \cC_n \ox \cM\|_\diamond \leq 2\ve$.
This can be shown as follows:
\begin{align*}
   \|\cG_2 - & \cC_n \ox \cM\|_\diamond\\
   & = \left\| \frac{1}{n} \sum_{k=1}^{n} \cN^{\ox k-1} \ox \cP^n_{1:n-k+1} \ox \ket{k}\bra{k} - \frac{1}{n} \sum_{k=1}^{n} \cN^{\ox k-1} \ox \cP^n_{1:n-k} \ox \cM \ox \ket{k}\bra{k}\right\|_\diamond\\
   & = \frac{1}{n} \sum_{k=1}^{n} \left\| \cN^{\ox k-1} \ox \cP^n_{1:n-k+1} - \cN^{\ox k-1} \ox \cP^n_{1:n-k} \ox \cM \right\|_\diamond\\
   & = \frac{1}{n} \sum_{k=1}^{n} \left\|\cP^n_{1:n-k+1} - \cP^n_{1:n-k} \ox \cM \right\|_\diamond\\
   & \leq \frac{1}{n} \sum_{k=1}^{n} \|\cP^n_{1:n-k+1} - \cM^{\ox n-k+1}\|_\diamond + \frac{1}{n} \sum_{k=1}^{n} \|\cP^n_{1:n-k} \ox \cM - \cM^{\ox n-k+1}\|_\diamond\\
   & = \frac{1}{n} \sum_{k=1}^{n} \|\cP^n_{1:n-k+1} - \cM^{\ox n-k+1}\|_\diamond + \frac{1}{n} \sum_{k=1}^{n} \|\cP^n_{1:n-k} - \cM^{\ox n-k}\|_\diamond\\
   & \leq 2\ve.
\end{align*}
where the first inequality follows by the triangle inequality of diamond norm and the last inequality follows from the assumption of $\|\cP^n - \cM^{\ox n}\|_\diamond \leq \ve$ and the monotonicity of diamond norm by taking reduced channel (same argument as Eq.~\eqref{eq: diamond norm partial trace}). As this holds for arbitrary $\ve$, we can take $\ve$ to be vanishingly small and conclude the proof.
\end{proof}

\section{Discussion}

In this work, we considered quantum resource distillation with the assistance of catalysis and presented universal protocols with rigorous efficiency and reusability guarantees that surpass known fundamental limitations of one-shot distillation overhead. Our results highlight the utility and power of catalytic methods in enabling flexible configuration of different types of resources in quantum information processing tasks, which has strong practical appeal.
Specifically, in the context of magic state distillation, our methods provide an experimentally viable new strategy for optimizing the cost of fault-tolerant quantum computing.

There remain various subtle issues that may arise in practical implementations, including noise and correlation effects on the catalysts, which warrant further investigation and optimization.
An important step forward is to study the benefits and challenges in implementing our catalytic distillation protocols with regard to specific experimental setups. These protocols are anticipated to be highly experiment-friendly due to the simplicity of required operations, specifically,  classically controlled operations and relabeling or possibly collective movements that are e.g.~natively efficient with the reconfigurable atom array platform, which emerged as a promising candidate  for experimental fault-tolerant quantum computing recently~\cite{Bluvstein_2023} with  unique advantages in overcoming the crucial non-Clifford gate barrier~\cite{Wang_2024}.

From the theoretical perspective, while the catalysts used in this work generally have a size comparable to the system in multi-shot protocols, necessitating the coherent manipulation of large quantum systems, the prospect of designing smaller, more efficient catalysts remains an open and promising avenue for future exploration. 
Moreover, given the connection between magic state distillation and quantum error-correcting (QEC) codes, our results may also provide new perspectives on catalytic QEC codes~\cite{brun2006correcting,brun2014catalytic}. Finally, extending these results to continuous variable systems and conducting a systematic study of channel theory remain promising directions for future work.

\vspace{1cm}

\noindent \textbf{Acknowledgements.} K.F. thanks Hayata Yamasaki for pointing out the deterministic nature of their magic state distillation protocols described in~\cite{wills2024constantoverheadmagicstatedistillation}. K.F. is supported by the National Natural Science Foundation of China (Grant No. 92470113 and 12404569), the Shenzhen Science and Technology Program (Grant No. JCYJ20240813113519025), the Shenzhen Fundamental Research Program (Grant No. JCYJ20241202124023031), the 1+1+1 CUHK-CUHK(SZ)-GDST Joint Collaboration Fund (Grant No. GRDP2025-022), and the University Development Fund (Grant No. UDF01003565).
 Z.-W.L.\ is supported in part by a startup funding from YMSC, Dushi program, and NSFC under Grant No.~12475023.

\bibliographystyle{ieeetr}
\bibliography{main.bib}

\end{document}